\definecolor{okabe1}{HTML}{000000}
\definecolor{okabe2}{HTML}{E69F00}
\definecolor{okabe3}{HTML}{56B4E9}
\definecolor{okabe4}{HTML}{009E73}
\definecolor{okabe5}{HTML}{F0E442}
\definecolor{okabe6}{HTML}{0072B2}
\definecolor{okabe7}{HTML}{D55E00}
\definecolor{okabe8}{HTML}{CC79A7}
\newcommand{\Crefpart}[2]{%
	\nameCref{#1}~\hyperref[#2]{\labelcref*{#1}.\ref*{#2}}%
}
\renewcommand{\emph}[1]{\textit{\textbf{#1}}}
\let\epsilon\varepsilon
\DeclarePairedDelimiter\pars{\lparen}{\rparen}
\begin{document}
\title{Fast Geographic Routing in Fixed-Growth Graphs}
%
%\titlerunning{Abbreviated paper title}
% If the paper title is too long for the running head, you can set
% an abbreviated paper title here
%
% \author{First Author\inst{1}\orcidID{0000-1111-2222-3333} \and
% Second Author\inst{2,3}\orcidID{1111-2222-3333-4444} \and
% Third Author\inst{3}\orcidID{2222--3333-4444-5555}}
\author{Ofek Gila\inst{1}\orcidID{0009-0005-5931-771X} \and
Michael T. Goodrich\inst{1}\orcidID{0000-0002-8943-191X} \and
Abraham M. Illickan\inst{1}\orcidID{0009-0006-4410-7098} \and
Vinesh~Sridhar\inst{1}\orcidID{0009-0009-3549-9589}}
\authorrunning{Gila et al.}
% First names are abbreviated in the running head.
% If there are more than two authors, 'et al.' is used.
%
% \institute{Princeton University, Princeton NJ 08544, USA \and
% Springer Heidelberg, Tiergartenstr. 17, 69121 Heidelberg, Germany
% \email{lncs@springer.com}\\
% \url{http://www.springer.com/gp/computer-science/lncs} \and
% ABC Institute, Rupert-Karls-University Heidelberg, Heidelberg, Germany\\
% \email{\{abc,lncs\}@uni-heidelberg.de}}
\institute{University of California, Irvine, USA\\
\email{\{ogila,goodrich,vineshs1,aillicka\}@uci.edu}}

\setcounter{page}{0}
\maketitle              % typeset the header of the contribution
\begin{abstract}
    In the 1960s, the social scientist Stanley Milgram performed his famous
    ``small-world'' experiments
    % . He
    where he 
    found that people in the US who are far
    apart geographically are nevertheless connected by remarkably short chains
    of acquaintances.
    Since then, there has been considerable 
        work to design networks that
    accurately model the phenomenon that Milgram observed.
    One well-known approach
    was Barab{\'a}si and Albert's \emph{preferential attachment} model,
    which has small diameter yet lacks an algorithm that can efficiently find
    those short connections between nodes.
    Jon Kleinberg, in contrast,
    proposed a small-world graph formed from an $n \times n$ lattice that
    guarantees that greedy routing can navigate between any 
    two nodes in $\mathcal{O}(\log^2 n)$ time
    with high probability.
    Further work by Goodrich and Ozel and by Gila, Goodrich, and Ozel 
        present a hybrid technique that combines elements from these previous
    approaches to improve 
        greedy routing time to $\mathcal{O}(\log n)$ hops.
    These are
    important theoretical results, but we believe that their reliance on the
        square lattice limits their application in the real world.
    Indeed, a lattice
    enforces a fixed integral \emph{growth rate}, i.e. the rate of increase in
    the number of nodes at some distance $\ell$ reachable from some node $u$. 
    % We are motivated by an earlier finding of Goodrich and Ozel, which 
    % experimentally found that routing in graphs that model the US road network
    % perform best when this growth rate parameter is near 1.5.
    % Thus, models that assume integral growth rates are not able to take advantage
    % of the underlying network topology.
    In
    this work, we generalize the model of Gila, Ozel, and Goodrich to any class
    of what we call \emph{fixed-growth} graphs of dimensionality $\alpha$, a subset of \textit{bounded-growth} graphs introduced in several
    prior papers.
    We prove tight bounds for greedy routing and diameter in these graphs, both
    in expectation and with high probability.
    We then apply our model to the U.S. road network to show that by modeling the
    network as a fixed-growth graph rather than as a lattice, we are able to
    improve greedy routing performance over all 50 states.
    We also show empirically that the optimal clustering exponent for the U.S.
    road network is much better modeled by the dimensionality of the network
    $\alpha$ than by the network's size, as was conjectured in a previous work.
    % We to take advantage
    % of the underlying network topology.
    % prove that our model also admits $\mathcal{O}(\log n)$ greedy routing in addition to
    % other bounds related to greedy routing and diameter. 
    % We believe that our result will help researchers model diverse phenomena
    % more accurately as small worlds. 
\end{abstract}

\keywords{small worlds \and random graphs \and bounded-growth graphs}

\clearpage % according to CIAC website guidelines, title and abstract should have their own page
\section{Introduction}

In a series of experiments in the 1960s, Stanley Milgram explored what is now
known as the \emph{small-world phenomenon}
\cite{milgram1967small,travers1977experimental}.
Milgram asked random volunteers across the United States to send messages to
each other by mail.
If a participant did not know the designated final recipient, they were
instructed to send the message to a friend or family member they judged more
likely to know the intended recipient.
Remarkably, Milgram found that it took a median of only six hops for the
messages to reach their final destination despite the vast geographic distances
between participants. 

Milgram's groundbreaking work spurred a field of research that seeks to model
the small-world phenomenon, finding networks that capture the following
observations that he made.
First, the network must be tightly connected, linking arbitrary pairs of nodes
with a short chain of connections.
Second, the network must admit an efficient greedy algorithm that allows nodes
to find such short chains with just their local knowledge.
Third, links should be formed in a way that models how social connections are made
in the real world.

\subsection{Kleinberg's Model and Extensions}

The most well-known model of the small-world phenomenon was described by
Kleinberg \cite{kleinberg2000small}.
Similar to the earlier Watts and Strogatz model \cite{watts1998collective},
Kleinberg envisions the social network as a two-dimensional lattice, in which
individuals have local connections to the nodes adjacent to them in the lattice
as well as ``long-range'' contacts to randomly chosen other nodes.
Kleinberg showed that the graph enables a greedy routing protocol in expected
$\mathcal{O}(\log^2 n)$ hops (where the graph is an $n \times n$ lattice).
These greedy routing results were impressive, but the model itself was not
realistic---each node had the same degree. In contrast, we observe in 
real-world data that node degree generally follows a power law distribution. 
As such, the perfect lattice structure was not
correlated with real population density, and the $\mathcal{O}(\log^2{n})$ hop bound
of Kleinberg's model, which was 
later shown to be tight~\cite{martel}, is not small enough to explain Milgram's
results.

A recent paper by Goodrich and Ozel~\cite{goodrich2022modeling} sought to
address these limitations by creating a model where node degrees are drawn from
a power law distribution while the average degree remains constant.
They tested their model on road networks of all 50 U.S.~states as a 
better proxy for population density.
While they found that their model performed
well enough \textit{empirically} to explain Milgram's experiments, 
they did not provide any theoretical results.

Even more recently, Gila, Ozel, and Goodrich provided some theoretical
motivation for these empirical results, showing how a simpler model comprised of
two types of nodes, normal nodes and \emph{highway nodes}, could perform greedy
routing using only $\mathcal{O}(\log n)$ hops in
expectation~\cite{gila2023highway}.
Highway nodes contain long-range contacts to other highway nodes, 
forming a highway subgraph, $\mathcal{H}$. 
They have higher degrees than normal nodes, but were sparse enough 
such that the graph's average degree was still a constant.
They showed how results for their simpler model can be used to show good,
namely $\mathcal{O}(\log^{1 + \epsilon} n)$, greedy routing time for similar
graphs with a power law degree distribution.
However, they only proved their results for perfect two-dimensional lattices, and
showed that there was a gap between their expected greedy routing time to the
best possible lower bound.
Further, they did not provide high probability bounds on their greedy routing
time, nor did they provide bounds on the diameter.

% Later work expanded on Kleinberg's results.
% In particular, we are interested in the recent small-world model created by
% Gila, Goodrich, and Ozel \cite{gila2023highway}.
% They use the fundamental metaphor of a road
% network. Instead of all nodes having long-distance connections, Gila {\it et
% al.} randomly designate certain nodes of the lattice as \emph{highway nodes},
% which have a greater concentration of long distance connections to each other.
% They improved routing to just $\mathcal{O}(\log n)$ time while still having the
% same average degree per node as Kleinberg's network.

% One limitation of
% %this line of work 
% these recent results is that these models
% % are variants of a lattice and thus 
% assume that the underlying network behaves like a two dimensional lattice, and thus that the
% optimal \emph{clustering coefficient}, a
% parameter that influences choice in long-distance connections, is two.

\subsubsection{Other Small-World Models.}

There is a rich body of related work on small-world models, including
work in other topologies, preferential attachment models, and other classical
results.
We review this material in \Cref{sec:related-work}.

\subsection{Dimensionality in Graphs}

Most theoretical results proved for Kleinberg's model and its extensions assume
that the underlying network, before adding long-range contacts, is a lattice
$\mathcal{L}$~\cite{kleinberg2000small,martel,gila2023highway}.
Some work even considers lattices of different dimensions, showing that the
optimal \emph{clustering coefficient}, a
parameter that influences choice in long-range contacts, is
the dimension of the lattice~\cite{martel}.
One limitation of the recent results with power law degree distributions is that
since they assume the underlying network behaves like a two dimensional lattice,
they use a value of two for the clustering
coefficient~\cite{goodrich2022modeling,gila2023highway}.
The experimental paper by Goodrich and Ozel \cite{goodrich2022modeling} tested
different coefficients for two U.S.~states 
and found that the best value for each of
them was different, and both much closer to 1.5. 
They conjectured that the difference is
related to the size of the state.
Inspired by their results, we are interested in this
paper on a notion of ``dimensionality'' that
we could apply to more general graphs than lattices, 
including road networks, to give
more insight into their behavior as small worlds.

One of the earliest notions of graph dimensionality was developed by 
Erd{\"o}s,
Harary, and Tutte~\cite{Erdos_Harary_Tutte_1965}.
They define a dimension of a graph $\mathcal{G}$ geometrically: the smallest
$d$ such that $G$ can be embedded in $d$-dimensional Euclidean space such that
all edges are of unit distance.
Another notion popular in the study of networks is the \emph{doubling dimension}
\cite{gupta2003bounded,abraham2006routing,cole2006searching,konjevod2008dynamic,chan2016hierarchical,konjevod2007compact}.
A metric space is doubling if there is some constant $\lambda$ such that any
ball of size $r$ can be covered by the union of $\lambda$ balls of size $r/2$.
Abraham, Fiat, Goldberg, and Werneck introduced the \emph{highway dimension} of
a graph, providing a unified framework for understanding shortest-path
algorithms \cite{abrahamhighway}.
A graph has small highway dimension if, for some set of {\it access nodes} $S_r$
such that all shortest paths of length $> r$ include a vertex of $r$, every ball
of size $\mathcal{O}(r)$ contains some elements of $S_r$.
Work has also been done to translate the study of \emph{fractal dimensions} of
geometric objects, such as the Hausdorff or box counting dimensions, to studying
complex networks.
Fractal dimensions model graphs of ``non-integral''
growth. See \cite{rosenbergsurvey} for a survey of fractal dimensions and their
application to networks.

We are more interested, however, in the body of work that studies graphs with
some limitation on the rate of ``growth'' of the graph,
% Several works also study graphs with the property that the growth of the graph,
the rate of increase in the number of nodes in a ball of radius $\ell$ around any
node $u$, is bounded by some function of
$\ell$~\cite{stefanExpansion,johannesExpansion,kargerExpansion,ittai2005name,kuhn2005locality,krauthgamer2003intrinsic,BARTAL2005192,gfeller2007randomized}.
For example, some enforce a \emph{bounded-growth} requirement, that
the size of a ball of radius $\ell$ around any node contain at most
$\mathcal{O}(\ell^\alpha)$ nodes for some constant $\alpha$.
% \vinesh{changed half a sentence:}
Our work considers a subset of such bounded-growth graphs with a slightly stricter requirement.
We consider the family of \emph{fixed-growth} graphs 
$\mathcal{F}$, where the number of
nodes in balls of radius $\ell$ around any node is both lower and upper bounded
by values in  $\Theta(\ell^\alpha)$ for reasonable values of $\ell$, where
$\alpha$ can be considered the dimensionality of the graph. 
Lattices of any dimension $\alpha$ are examples of fixed-growth graphs.
% Lattices of any dimension are examples of fixed-growth graphs, where $\alpha$ is
% the dimension of the lattice.

A paper by Duchon, Hanusse, Lebhar, and Schabanel extends Kleinberg's results to
more general graphs following a similar requirement~\cite{duchon2006could}.
In that paper, however, they were only able to show that greedy routing runs in
some polylogarithmic time, without determining the specific exponent, achieving
worse results than Kleinberg's let alone those of more recent papers.

% Prior work most similar to ours is a paper authored by Duchon, Hanusse, Lebhar,
% and Schabanel \cite{duchon2006could}.
% They apply Kleinberg's results to a more general set of graphs, but this means
% that the best they can prove is that greedy routing runs in $\mathcal{O}(\log^2
% n)$ time.
% For many types of graphs, they are only able to show that greedy routing runs in
% polylogarithmic time, without determining the specific exponent.
% Furthermore, they only focus on greedy routing time whereas we also study the
% diameter of the graphs in our generalization.

% With the exception of highway dimension, all of these notions seek to bound how
% fast a graph grows or the complexity of a graph's representation. Indeed, the
% bounded-growth-style definitions are almost identical to our fixed-growth model.
% Because of these broad similarities, we believe that our small-world models will
% be broadly applicable in future network research that considers graphs with
% these related definitions of dimensionality.  

\subsection{Our Results}

In this paper, we extend on the theoretical work 
of Gila, Ozel, and Goodrich~\cite{gila2023highway},
providing tight bounds on greedy routing time and diameter.
Further, in addition to greedy routing results \textit{in expectation}, we
provide results for greedy routing with high probability.
We prove all these results for a more general class of graphs, fixed-growth
graphs.

We motivate our use of the fixed-growth model with empirical results on U.S.
road networks, showing how by modeling U.S. road networks as fixed-growth graphs
we are able to pick clustering coefficients that perform much better than by
just assuming a lattice structure.
Further, we show how our notion of fixed-growth dimensionality is a much better
indicator of the optimal clustering coefficient than the size of the network, as
was conjectured in~\cite{goodrich2022modeling}.

We believe that by extending the results of Gila {\it et al.} to fixed-growth graphs, our work enables more accurate analyses of phenomena
modeled using small-world graphs such as disease spread
\cite{BARTHELEMY20111,warren2001firewallsdisorderpercolationepidemics,SANDER20031,LI2021111294,costaanalyzing2011},
decentralized peer-to-peer communication
\cite{dahliaviceroy2002,zhangusing2002,qindongsecure2022,shinsmall2011,li2005searching,singla2012jellyfish,hui2004small},
and gossip protocols \cite{kempe2004spatial,kempe2002protocols}.

\section{Definitions}

In general, small-world models take an existing undirected connected graph $G$,
such as a lattice $\mathcal{L}$, and augment it into a small-world graph
$\mathcal G$ by adding \textit{directed} long-range edges.
The original edges of $G$ are referred to as \textit{local contacts}, and the
new edges are referred to as \textit{long-range contacts}.
% Let $n$ be the number of vertices of $G$.
% We call the original edges of $G$ local contacts. 
% Here, we consider the edges of $G$ as bidirectional, but the long-distance edges as directed.
% We define the distance $d(u,v)$ as the number of local contacts it takes to walk from nodes $u$ to $v$ in some small-world graph. 
% We say that the chance 
The probability
% that some node $u$ has a long-range contact $v$
that a long-range contact added to node $u$ is another node $v$
% in $\mathcal G$ 
is proportional to $d(u,v)^{-s}$, where $s$ is called the \textit{clustering
coefficient}, and $d(u,v)$ denotes the distance between $u$ and $v$ using only local
contacts in the original graph $G$.
A greater clustering coefficient biases long-range contacts to nodes that are
nearer to $u$.
% We call $s$ the clustering coefficient and its choice impacts the speed at which greedy routing takes place. 
A ``ball'' of radius $\ell$ centered around a node $u$, denoted as
% Let ball 
$\mathcal B_\ell(u)$%
, is
% of radius $\ell$ at $u$ to be 
the set of all nodes within distance $\ell$ of $u$.

% moved down

% Consider two balls $\mathcal B_{\ell_1}$ and $\mathcal B_{\ell_2}$ such that $\ell_2 \geq \ell_1$. Let $w = \ell_2 - \ell_1$ and let shell $S^{\ell_2}_{\ell_1} = \mathcal B_{\ell_2}(u) - \mathcal B_{\ell_1}(u)$, the set of all vertices of distance greater than $\ell_1$ and less than or equal to $\ell_2$ of $u$. 
% We call $w$ the width of $S$.
% We denote $\mathcal S^{(w)}(u)$ as the set of all disjoint shells centered at $u$, each of width $w$ (the first shell equals the ball $\mathcal B_w(u)$). 

We define a graph family $\mathcal{F}$ as having \emph{fixed-growth} (FG)
dimensionality $\alpha$ if, for all possible graph sizes $n$,
for all nodes $u \in \mathcal{F}(n)$, and for all radii
$1 \leq \ell \leq \Theta(\sqrt[\alpha]{n})$, $|\mathcal B_\ell(u)| =
\Theta(\ell^\alpha)$, where the ball of some radius $\ell \in
\Theta(\sqrt[\alpha]{n})$ encompasses the entire graph.\footnote{
Below, we loosely use the term ``graph'' to denote entire graph families.}
Note that the constants hidden in the $\Theta$ notation are defined over the
entire graph family $\mathcal{F}$ and do not depend on the graph size $n$.
Also observe that all fixed-growth graphs
have constant degree, since the number of nodes at distance 1 from any node,
$|\mathcal B_1(u)|$, is $\Theta(1)$.
Nodes in $\alpha$-dimensional lattices have degree $2 \alpha$, which is indeed
constant with respect to $\alpha$.
We compare the growth of balls in fixed-growth graphs to that of lattices in
\Cref{fig:balls-and-shells}.

Throughout the paper, we denote certain events as having high probability in
$x$, where $x$ is a variable such as $n$ or a function of a variable such as
$\log n$. An event succeeds with high probability in $x$ if it succeeds with
probability at least $1 - x^{-c}$ for some constant $c > 1$. 
% As shown in \Cref{fig:weird-shell}, fixed-growth graphs may have unusual
% underlying geometry.
% Regardless, our definitions of balls and shells still
% apply.

% \begin{figure}
%     \centering
%     \includegraphics[width=0.7\linewidth]{}
%     \caption{Here we have an example of a ball and a shell centered at the blue node $u$. The lightly shaded region is the ball $\mathcal B_2(u)$ and the darkly shaded region is the shell $S^4_2(u)$. This shell's width $w$ is 2. Note that, by definition, $S^4_2(u)$ does not contain the nodes on the boundary of ball $\mathcal B_2(u)$. The reason for this is to prevent overlap when defining a system of shells $\mathcal S$.}
%     \label{fig:balls-shells}
% \end{figure}

\begin{figure}[t]
	\centering
	\includegraphics[height=.28\textheight]{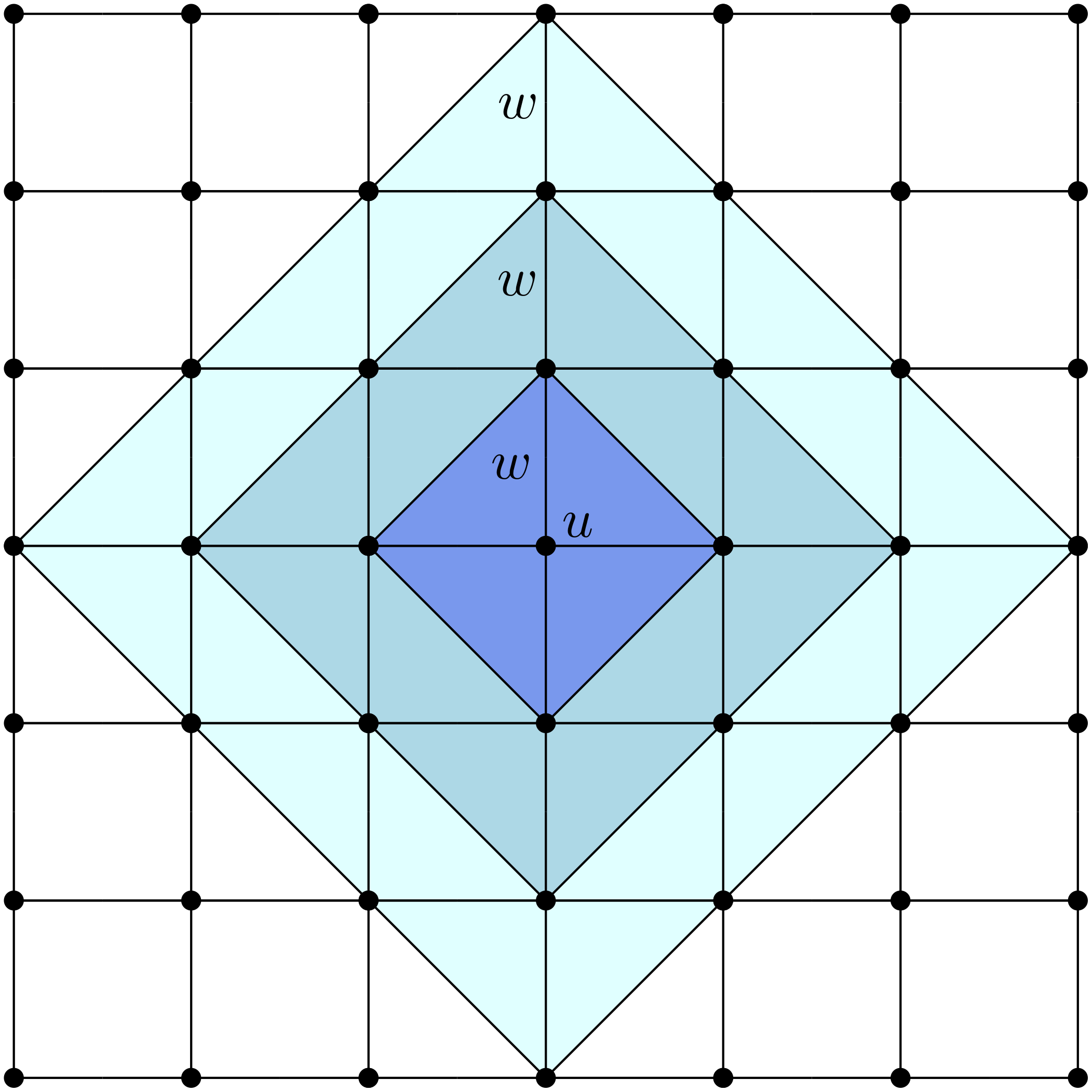}
	\hfill
	\includegraphics[height=.28\textheight]{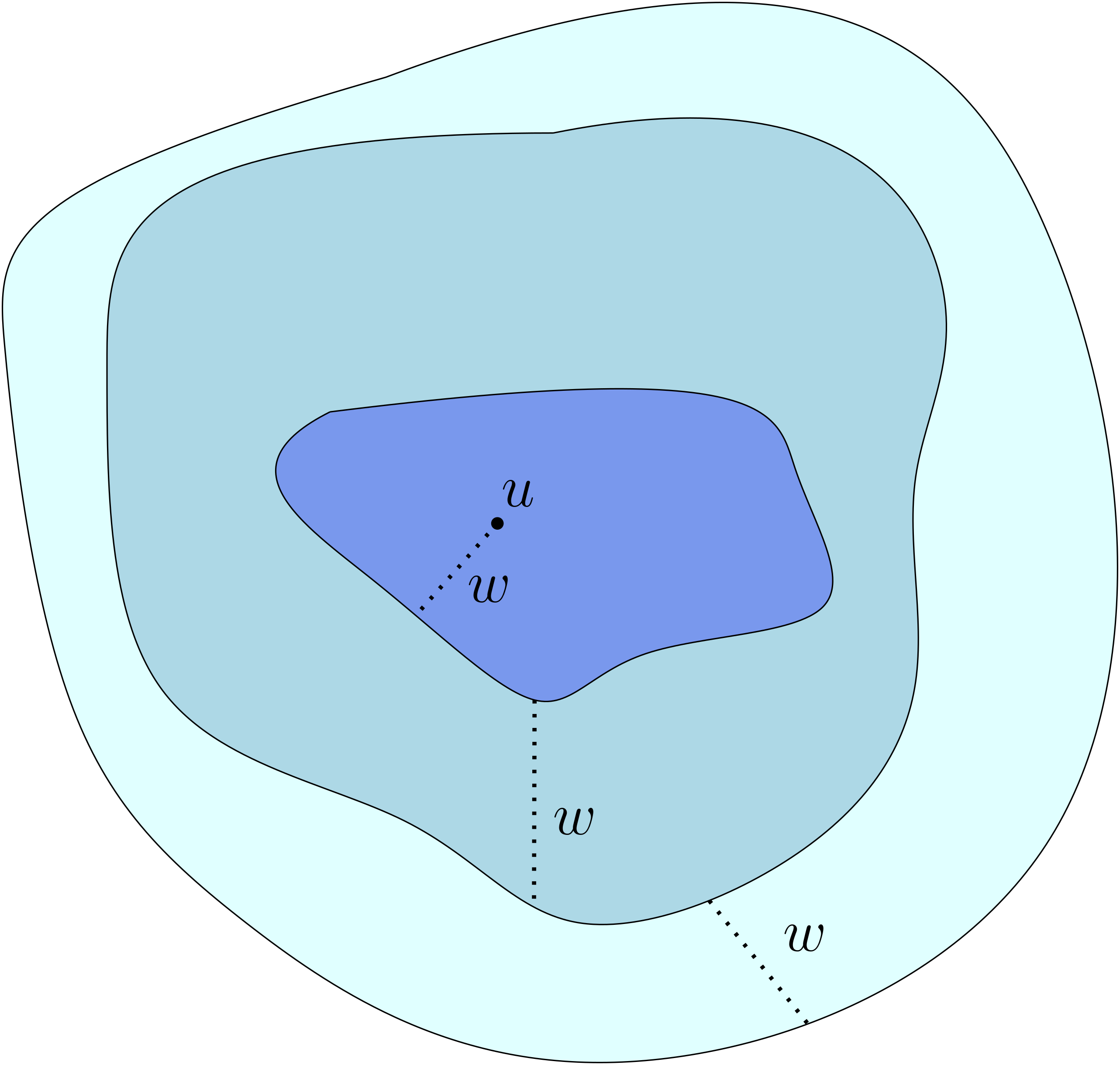}
	% \vspace*{-\medskipamount}
	\caption{
		In these figures we show balls of various widths surrounding some
		central node $u$ in a lattice graph (left) and a fixed-growth graph
        (right).
		Note how balls grow far more predictably in lattices than they do in
		general fixed-growth graphs.
		We refer to the shaded regions in between two balls as shells, and the
		difference in radius between two balls as the shell width $w$.
        \label{fig:balls-and-shells}
    }
	\vspace*{-\medskipamount}
\end{figure}

% \begin{figure}[t]
%     \centering
%     \includegraphics[width=0.6\linewidth]{figures/weird_shell.png}
%     \caption{Here we have a representation system of shells $\mathcal S$ surrounding some node $u$ in an arbitrary fixed-growth graph. We make no assumptions of geometric structure, so concentrations of nodes in space may vary. Thus, shell shapes may have arbitrary structure, as shown here. Each shell is nonetheless required to have width $w$. The innermost darkly shaded ball only contains nodes that can be reached from $u$ by walking along at most $w$ edges. Likewise, the first shell contains all nodes that can only be reached by walking more than $w$ edges but at most $2w$ edges from $u$.
%     }
%     \label{fig:weird-shell}
% \end{figure}

\subsection{Randomized Highway Model}
The \emph{randomized highway model} introduced by 
Gila, Ozel, and Goodrich
% 's randomized highway model
takes a
2-dimensional lattice $\mathcal L$ and augments it into a small-world graph
$\mathcal K$ with some highway constant $k$~\cite{gila2023highway},
by adding long-range contacts with a clustering coefficient of 2.
% They introduce the highway constant $k \in [1, n]$.
However, unlike in previous models, not all nodes have long-range
contacts.
Rather,
each node
in $\mathcal L$ independently has $1/k$ chance of becoming a {\it highway node},
where only highway nodes have long-range contacts.
Because there are fewer highway nodes, each is able to have more long-range
contacts while maintaining a constant average degree.
Specifically,
to create the small-world graph $\mathcal K$, they add $q \times k$
long-range contacts to each highway node for some constant $q$.
As a result, the average degree of $\mathcal{K}$ is $q$.
Further, these
long-range contacts exclusively point to other highway nodes.

In our paper we extend the randomized highway model to a more general setting.
Consider an arbitrary fixed-growth graph $\mathcal F$ with dimensionality
$\alpha$.
We turn it into a randomized highway graph $\mathcal{G}$ by adding long-range
contacts with a clustering coefficient of $\alpha$ rather than 2.
We also allow each highway node to add $\Theta(k)$ long-range contacts rather
than strictly $q \times k$, where the constant bounds are the same across all
nodes.
\section{Our Results in the Fixed-Growth Model}

In this section, we present several results for highway graphs built from more
general underlying graphs.
As discovered by \cite{gila2023highway}, we obtain optimal results when $k =
\Theta(\log n)$, and use this as the canonical value for our key results.
% We also provide supplemental results for general values of $k$.
% For simplicity, we are very loose with our constants, using big-$\mathcal{O}$
% notation liberally.

\subsection{Preliminary Results}

In this section we go over results which are essential for 
our bounds on greedy routing and diameter.

\subsubsection{Balls.}
One of the key difficulties overcome in the original randomized Kleinberg
highway model was dealing with dependent probabilities.
% \vinesh{what is $t$ here? is it any node or specifically the destination of somethign?}
For example, if there are only a few highway nodes close to the destination $t$, the probability that
a highway node along the greedy routing path has a long-range contact near $t$ will
be smaller for every highway node visited.
Gila, Ozel, and Goodrich solved this in two key ways---assuming all edges are
directed\footnote{Using undirected edges would be a strict improvement.}, and by
subdividing the graph into separate, disjoint diamonds (balls) \cite{gila2023highway}.
Each ball is large enough such that the number of highway nodes it contains can
be bounded with high probability.
Here we generalize these results for $\alpha$-dimensional fixed-growth graphs:
\begin{lemma} \label{lem:balls}
    Results for any randomized highway graph $\mathcal{G}$ with FG dimensionality $\alpha$ and highway constant $k$:
    \begin{enumerate}
        \item There exist balls of radius $\Omega(\sqrt[\alpha]{k \log{n}}) \leq
        \ell \leq \sqrt[\alpha]{n}$, centered around any of the $n$ nodes,
        containing $\Theta(\ell^\alpha / k)$ highway nodes with high probability in
        $n$. \label{lem:balls:always}

        % Commented out cause redundant, but still perhaps nice
        % \item All balls of radius $\Theta(\sqrt[\alpha]{k \log{n}})$, centered around any
        % of the $n$ nodes in the graph, contain $\Theta(\log{n})$ highway nodes
        % w.h.p. in $n$.

        \item With high probability in
        $\log{n}$, the balls of radius $\Theta(\sqrt[\alpha]{k
        \log{\log{n}}})$ centered around any $\Theta(\log^2{n})$ nodes each contain
        $\Theta(\log{\log{n}})$ highway nodes. \label{lem:balls:log}
        
        \item Any arbitrary ball of radius $\Theta(\sqrt[\alpha]{k})$ contains
        no highway nodes with constant probability.
        This result is not a high probability bound, and is only independent for
        balls centered around nodes with distance at least $c \sqrt[\alpha]{k}$,
        for some constant $c$ hidden in the big-$\mathcal{O}$ notation.
        \label{lem:balls:constant}
    \end{enumerate}
\end{lemma}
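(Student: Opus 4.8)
The plan is to analyze each part via a Chernoff-type concentration argument on the number of highway nodes in a ball, then tune the parameters and union bounds to get the three regimes. For a ball $\mathcal{B}_\ell(u)$ in a fixed-growth graph of dimensionality $\alpha$, the fixed-growth property gives $|\mathcal{B}_\ell(u)| = \Theta(\ell^\alpha)$, and since each node is independently a highway node with probability $1/k$, the number $X$ of highway nodes in the ball is a sum of $\Theta(\ell^\alpha)$ independent Bernoulli$(1/k)$ variables, so $\mathbb{E}[X] = \Theta(\ell^\alpha/k)$. The core tool throughout is the multiplicative Chernoff bound: $\Pr[|X - \mu| \ge \delta\mu] \le 2\exp(-\delta^2\mu/3)$ for $\delta \in (0,1)$.

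For \crefpart{lem:balls}{lem:balls:always}, take $\mu = \Theta(\ell^\alpha/k)$ and a constant $\delta$. When $\ell \ge \Omega(\sqrt[\alpha]{k\log n})$ we have $\mu = \Omega(\log n)$, so the Chernoff bound gives failure probability $2\exp(-\Theta(\log n)) = n^{-c}$ for a constant $c > 1$ that we can make as large as we like by inflating the constant inside $\Omega(\sqrt[\alpha]{k\log n})$. Then union-bound over all $n$ choices of center $u$ (and, if desired, over a polynomial range of radii $\ell$, since there are at most $n$ integer radii in the stated range), losing only a factor of $\mathrm{poly}(n)$, which is absorbed by increasing $c$. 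This yields $X = \Theta(\ell^\alpha/k)$ simultaneously for all such balls with high probability in $n$.

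For \crefpart{lem:balls}{lem:balls:log}, the radius is $\Theta(\sqrt[\alpha]{k\log\log n})$, so $\mu = \Theta(\log\log n)$; the same Chernoff bound gives failure probability $\exp(-\Theta(\log\log n)) = (\log n)^{-c'}$, and union-bounding over $\Theta(\log^2 n)$ specified centers costs a $\log^2 n = \exp(2\log\log n)$ factor, still leaving a bound of the form $(\log n)^{-c}$ with $c > 1$ after choosing the hidden constants appropriately — i.e. high probability in $\log n$. For \crefpart{lem:balls}{lem:balls:constant}, the radius is $\Theta(\sqrt[\alpha]{k})$, so $\mu = \Theta(1)$; here we cannot concentrate, but $\Pr[X = 0] = (1 - 1/k)^{\Theta(k)} = \Theta(1)$ bounded away from $0$ and $1$, giving the claimed constant probability. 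Independence across balls holds precisely when the balls are vertex-disjoint, which the fixed-growth lower bound $|\mathcal{B}_\ell(u)| = \Omega(\ell^\alpha)$ guarantees once the centers are separated by distance $c\sqrt[\alpha]{k}$ for a suitable constant $c$ (so that balls of radius $\Theta(\sqrt[\alpha]{k})$ around them are disjoint), matching the statement.

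The main obstacle I anticipate is bookkeeping the interaction between the hidden constants: the fixed-growth $\Theta(\cdot)$ constants, the Chernoff $\delta$, the union-bound polynomial factors, and the desired exponent $c > 1$ must all be threaded consistently, and in particular the lower bound on $\ell$ in part~\labelcref{lem:balls:always} (the $\Omega(\sqrt[\alpha]{k\log n})$ threshold) must be chosen large enough to beat the union bound over all centers — this is where the precise constant in $\Omega$ is forced. A secondary subtlety is that the fixed-growth guarantee $|\mathcal{B}_\ell(u)| = \Theta(\ell^\alpha)$ is only assumed for $1 \le \ell \le \Theta(\sqrt[\alpha]{n})$, so we must check that all radii used in the three parts lie in this valid range (they do, since $k = \Theta(\log n)$ makes even $\sqrt[\alpha]{k\log n} = \mathrm{polylog}(n) \ll \sqrt[\alpha]{n}$).
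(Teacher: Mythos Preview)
Your proposal is correct and follows essentially the same approach as the paper: both compute $\mathbb{E}[X]=\Theta(\ell^\alpha/k)$ from the fixed-growth ball size, apply a multiplicative Chernoff bound, and union-bound over centers, tuning the radius constant so that the resulting exponent beats the union bound. The only cosmetic differences are that the paper cites a specific ``simplified'' Chernoff form (giving $2^{-\mu}$ for $\delta>0.91$ and $2^{-\delta\mu}$ for $\delta>2.21$) rather than the standard $e^{-\delta^2\mu/3}$ you use, and it omits your extra remarks about union-bounding over radii and checking the valid range of $\ell$; one small slip is that disjointness of the balls in part~3 follows from the triangle inequality (centers $>2r$ apart), not from the fixed-growth lower bound you cite.
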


\begin{proof}
    Since $\mathcal{G}$ is an $\alpha$-dimensional fixed-growth graph, we know
    that there are $\Theta(\ell^\alpha)$ nodes in any ball of radius $\ell \in
    \mathcal{O}(\sqrt[\alpha]{n})$.
    There exists some constant $c_1$ such that, for any node $u$ in the graph,
    $|B_\ell(u)| \geq c_1 \ell^\alpha$.
    Letting $\ell = c_2 \sqrt[\alpha]{k \log{n}}$ for some constant $c_2$, there
    are at least $c_1 c_2^\alpha k \log{n}$ nodes in $B_\ell(u)$.
    Because the probability of any node being a highway node is $1/k$, the
    expected number of highway nodes in $B_\ell(u)$ is at least $c_1
    c_2^\alpha \log{n}$.
    Letting $X$ be the number of highway nodes in $B_\ell(u)$, we can apply the
    simplified Chernoff bounds of~\cite{DILLENCOURT2023106397} to bound the number
    of highway nodes in the ball:
    \begin{equation*}
        \Pr(X < (1 - \delta) \mu) < 2^{-\mu} \leq 2^{-c_1 c_2^\alpha \log{n}} = n^{-c_1 c_2^\alpha}
    \end{equation*}
    Note that this holds for $\delta > 0.91$.
    Applying a union bound over all $n$ nodes in the graph, we can bound the
    probability that any such ball contains fewer nodes by $n^{1 - c_1
    c_2^\alpha}$.
    The constant factor in the ball radius, $c_2$, can be chosen to be large
    enough to make this a high probability bound.
    For an upper bound, we apply the same reasoning, assuming that $|B_\ell(u)|
    \le c_3 \ell^\alpha$ for some constant $c_3 > c_1$.
    For our aforementioned balls, we expect to have at most $c_3 c_2^\alpha
    \log{n}$ highway nodes.
    Using a similar Chernoff bound from~\cite{DILLENCOURT2023106397}, for
    $\delta > 2.21$, we have:
    \begin{equation*}
        \Pr(X > (1 + \delta) \mu) < 2^{-\delta \mu} \leq 2^{-\delta c_3 c_2^\alpha \log{n}} = n^{-\delta c_3 c_2^\alpha}
    \end{equation*}

    In this case, the probability any such ball contains more nodes can be
    bounded by $n^{1 - \delta c_3 c_2^\alpha}$.
    $\delta$ can now be chosen to be sufficiently large such that this is a high
    probability bound.
    Combining the lower and upper bounds, we see that balls of radius $\ell =
    c_2 \sqrt[\alpha]{k \log{n}}$ contain $\Theta(\log{n}) = \Theta(\ell^\alpha
    / k)$ highway nodes with high probability in $n$.
    Since this holds for some $\ell$ in $\Theta(\sqrt[\alpha]{k \log{n}})$, it
    trivially holds for all larger balls 
    (observe that increasing $\ell$ would only make the exponent 
    derived from the Chernoff bounds smaller).
    This proves the first part of the lemma.
    The remaining parts of the lemma can be proven similarly.
    \qed
\end{proof}

\vspace*{-\bigskipamount}

\subsubsection{Shells.}

In the original Kleinberg model and its variants, the fact that the underlying
graph $\mathcal{L}$ was a perfect lattice was heavily utilized in the analyses 
\cite{kleinberg2000small,martel,gila2023highway}.
Unfortunately, we are unable to make such strict assumptions about the
underlying geometry of our fixed-growth graph $\mathcal{F}$.
One of the places this assumption breaks down is in the ability to neatly tile
independent balls the graph as is done in the nested lattice construction of
\cite{gila2023highway}.
In the fixed-growth setting, we have have no guarantees that, 
if we were to do this, no two balls would overlap.
This would violate their independence, making us unable to analyze how long-range
contacts are created.
We instead utilize the concept of a \emph{shell} when handling fixed-growth graphs.
Let $\mathcal{S}_b^{(w)}(u) = \mathcal{B}_{(b + 1)w_{b + 1}}(u) - \mathcal{B}_{bw_b}(u)$
be the shell at shell-distance $b$ and shell-width $w$ around node $u$.
Notice that this system of shells partitions the plane, 
even in the fixed-growth setting.
See \Cref{fig:balls-and-shells}.

\begin{lemma} \label{lem:shells}
    There exists a sequence of shell widths $w_b \in \Theta(\sqrt[\alpha]{k
    \log{n}})$, such that the number of highway nodes in a shell at
    shell-distance $b$ from any node is $\Theta(b^{\alpha - 1} \log{n})$ w.h.p.
    for any $1 \leq b \leq \Theta(\sqrt[\alpha]{n / (k \log{n})})$.
    For simplicity, we refer to any shell width $w_i$ as $w \in
    \Theta(\sqrt[\alpha]{k \log{n}})$, affecting at most constant factors.
\end{lemma}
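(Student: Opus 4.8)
The plan is to write each shell as a set-difference of two balls and read off its highway-node count from \Cref{lem:balls}. Concretely, I would fix a width scale $w=\Theta(\sqrt[\alpha]{k\log n})$ — this is the smallest scale at which the first part of \Cref{lem:balls} already certifies $\Theta(\log n)$ highway nodes per ball, enough for a high-probability-in-$n$ Chernoff bound — and, for each center $u$, choose boundary radii $r_0=0<r_1<r_2<\cdots$ with $r_b/b\in\Theta(\sqrt[\alpha]{k\log n})$, so that $\mathcal S_b^{(w)}(u)=\mathcal B_{r_{b+1}}(u)-\mathcal B_{r_b}(u)$ genuinely partitions the graph. For $1\le b\le\Theta(\sqrt[\alpha]{n/(k\log n)})$ both $r_b$ and $r_{b+1}$ fall in the radius range $\Omega(\sqrt[\alpha]{k\log n})\le\ell\le\sqrt[\alpha]{n}$ covered by the first part of \Cref{lem:balls}, so, with high probability in $n$ and simultaneously over all centers $u$ and all such $b$, we get $|\mathcal B_{r_b}(u)\cap\mathcal H|=\Theta(r_b^{\alpha}/k)=\Theta(b^{\alpha}\log n)$. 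The highway count in the shell is then $|\mathcal B_{r_{b+1}}(u)\cap\mathcal H|-|\mathcal B_{r_b}(u)\cap\mathcal H|$, which we want to pin to $\Theta(b^{\alpha-1}\log n)=\Theta\big(((b+1)^{\alpha}-b^{\alpha})\log n\big)$.

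The step I expect to be the real obstacle is exactly turning that difference of two $\Theta(\cdot)$ quantities into a $\Theta$ of the difference: the growth hypothesis $|\mathcal B_\ell(u)|=\Theta(\ell^{\alpha})$ only pins ball sizes to within a constant factor, so a fixed common width does not work — a priori the ball profile could make a large ``jump'' near a critical radius and overload one shell. I would resolve this by choosing the $r_b$ \emph{adaptively per center}: let $r_b$ be the radius at which $|\mathcal B_\ell(u)|$ first reaches a common reference value $c\,b^{\alpha}k\log n$ for a fixed constant $c$. The two-sided growth bound then forces $r_b\in\Theta(b\sqrt[\alpha]{k\log n})$, hence $w_b:=r_b/b\in\Theta(\sqrt[\alpha]{k\log n})$ as the statement demands, and because every center's ball profile now tracks the same smooth curve $\sim c\,b^{\alpha}k\log n$, consecutive ball sizes differ by $\Theta(b^{\alpha-1}k\log n)$ and the corresponding highway counts by $\Theta(b^{\alpha-1}\log n)$. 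This per-center adaptivity is precisely why the lemma is phrased as ``there exists a sequence of shell widths $w_b$'' rather than a single width; for the motivating lattice examples no adaptivity is needed, since there $|\mathcal B_\ell(u)|$ is genuinely a degree-$\alpha$ polynomial with a fixed leading coefficient, and the naive fixed-width argument already gives the bound.

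Finally I would assemble the high-probability statement by a union bound: there are $O(n)$ centers and $O(\sqrt[\alpha]{n/(k\log n)})=O(\sqrt[\alpha]{n})$ relevant values of $b$, so $O(n^{1+1/\alpha})$ events, and the ball estimate of \Cref{lem:balls} can be driven to failure probability $n^{-c}$ for any constant $c$ by enlarging the constant in the ball radius; taking $c>3$ makes the union bound $O(n^{-1})$, which is high probability in $n$. Combined with the deterministic fact that the chosen radii partition the graph into shells of width $\Theta(\sqrt[\alpha]{k\log n})$, this gives the lemma, and the closing remark of the statement — that we may henceforth treat every $w_i$ as a single $w\in\Theta(\sqrt[\alpha]{k\log n})$ — then only costs constant factors in all later estimates.
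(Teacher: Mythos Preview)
Your proof is essentially the paper's argument. The paper also chooses the shell widths adaptively---phrased there as ``all $w_i$ are chosen \ldots\ such that the constants in this theta notation are the same''---which is exactly your per-center first-hitting-radius construction $|\mathcal B_{r_b}(u)|=cb^{\alpha}k\log n$; from there both arguments read off the deterministic shell size $\Theta(b^{\alpha-1}k\log n)$, apply a Chernoff bound (the expected highway count $\Theta(b^{\alpha-1}\log n)\ge\Theta(\log n)$ is large enough), and union-bound over centers and $b$.

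One point to tighten in your write-up: the concentration step should be the Chernoff argument from the \emph{proof} of \Cref{lem:balls} applied directly to each shell, not \Cref{lem:balls} itself applied to the two bounding balls followed by subtraction of the resulting highway-count estimates. Even after your adaptive fix, a fixed-$\delta$ Chernoff on each ball only gives counts in $(1\pm\delta)cb^{\alpha}\log n$, and for large $b$ the difference of two such intervals is not pinned to $\Theta(b^{\alpha-1}\log n)$. Since your construction makes the shell size deterministically $\ge\Theta(k\log n)$, Chernoff on the shell itself closes cleanly---which is precisely what the paper does.
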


\begin{proof}
    The number of nodes in this shell $|\mathcal{S}_b^{(w)}(u)|$ is the difference
    in the number of nodes between balls of radii $(b + 1)w_{b + 1}$ and $bw_b$, i.e.,
    $\Theta(((b + 1)w_{b + 1})^\alpha) - \Theta((bw_b)^\alpha)$.
    All $w_i$ are chosen from $\Theta(\sqrt[\alpha]{k \log{n}})$ such that
    the constants in this theta notation are the same, making the difference
    $\Theta(((b + 1)^\alpha - b^\alpha)w^\alpha)$.
    Through a simple polynomial expansion, we see that the total number of nodes in
    this ball is $\Theta(b^{\alpha - 1} w^\alpha)$ for any $b \ge 1$.
    Since each node has a $1/k$ chance of being a highway node, the expected number
    of highway nodes in this shell is $\Theta(b^{\alpha - 1} w^\alpha / k)$.
    As in the proof of \Cref{lem:balls}, in order to obtain a high probability
    bound over all $n$ nodes, we require at least $\Theta(\log{n})$ highway
    nodes in each shell.
    This is achieved by choosing $w = c_2 \sqrt[\alpha]{k \log{n}}$ for a
    sufficiently large constant $c_2$.
    \qed
\end{proof}

\subsubsection{Normalization Constant.}
A fundamental piece of information we need to know about any given
highway node $u$ is the probability that it is connected to another specific
highway node $v$.
In our fixed-growth model, in a graph $\mathcal{G}$ with dimensionality
$\alpha$, the probability is inversely proportional to the distance to the
power of $\alpha$, i.e., $\Pr(u \to v) \propto d(u, v)^{-\alpha}$.
To get the exact probability, we need to normalize this constant, i.e.,
$\Pr(u \to v) = d(u, v)^{-\alpha} / \sum_{h \in \mathcal H}(d(u, h)^{-\alpha})$, where
$\mathcal H$ is the subgraph of highway nodes in $\mathcal G$.
That denominator is the normalization constant, $z$. 
To lower bound
the probability, we must upper bound this constant.

\begin{lemma} \label{lem:z}
    Results for any randomized highway graph $\mathcal{G}$ with fixed-growth
    dimensionality $\alpha$ and highway constant $k$:
    \begin{enumerate}
        \item The normalization constant $z(u)$ is
        $\mathcal{O}(\frac{\log{n}}{k} + \log{\log{n}})$ for any highway node
        $u$ with high probability in $n$. \label{lem:z:always}

        \item The normalization constant $z(u)$ is
        $\mathcal{O}(\frac{\log{n}}{k} + \log{\log{\log{n}}})$ for
        $\mathcal{O}(\log^2{n})$ arbitrary highway nodes w.h.p. in $\log{n}$. \label{lem:z:log}

        \item The normalization constant $z(u)$ is
        $\mathcal{O}(\frac{\log{n}}{k})$ for an arbitrary highway node $u$
        with constant probability. \label{lem:z:constant}

        \item The normalization constant $z(u)$ is $\Omega(\frac{\log{n}}{k})$
        for any highway node $u$ with high probability in $n$.
        \label{lem:z:lower-always}
    \end{enumerate}
\end{lemma}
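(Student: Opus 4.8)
The plan is to write $z(u) = \sum_{h \in \mathcal{H},\, h \ne u} d(u,h)^{-\alpha}$ and split the sum at the radius $w = \Theta(\sqrt[\alpha]{k \log n})$ into a \emph{far} part over the shells $\mathcal{S}^{(w)}_b(u)$ of \Cref{lem:shells} with $b \ge 1$ and a \emph{near} part over the innermost ball $\mathcal{B}_w(u)$. The far part is handled directly by \Cref{lem:shells}: w.h.p.\ in $n$, shell $\mathcal{S}^{(w)}_b(u)$ contains $\Theta(b^{\alpha-1}\log n)$ highway nodes, each at distance $\Omega(b\sqrt[\alpha]{k\log n})$ from $u$, so it contributes $\Theta(b^{\alpha-1}\log n) \cdot \Theta((b\sqrt[\alpha]{k\log n})^{-\alpha}) = \Theta(1/(bk))$ to $z(u)$; summing this harmonic series over $1 \le b \le \Theta(\sqrt[\alpha]{n/(k\log n)})$ gives a far contribution of $\Theta(\log n / k)$. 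The two-sided bound obtained here already proves \crefpart{lem:z}{lem:z:lower-always}, since the near part is nonnegative.

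The crux is the near part, and the tool I would set up first is a deterministic consequence of fixed growth: for any node $u$ and any set $T$ of $m$ nodes other than $u$, $\sum_{v \in T} d(u,v)^{-\alpha} = \mathcal{O}(\log(m+2))$. This follows by bucketing $T$ by dyadic distance — the bucket at distances $[2^j, 2^{j+1})$ holds at most $|\mathcal{B}_{2^{j+1}}(u)| = \Theta(2^{j\alpha})$ nodes, hence contributes $\mathcal{O}(1)$, and only $\mathcal{O}(\log m)$ buckets can be populated before the cap $|T| = m$ forces a geometric tail. This is what converts a ``few highway nodes crammed close to $u$'' scenario into merely an additive iterated-log term, and is the reason the correction terms in \crefpart{lem:z}{lem:z:always} and \crefpart{lem:z}{lem:z:log} are so small.

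For the near part I would split $\mathcal{B}_w(u)$ at an auxiliary radius $r \le w$ into $\mathcal{B}_r(u)$ and the annulus $\mathcal{B}_w(u) \setminus \mathcal{B}_r(u)$. The annulus spans $\Theta(\log(w/r))$ dyadic scales, each bounded by $\mathcal{O}(1)$ as above, and on any scale whose expected highway count is $\Omega(\log x)$ one can instead use the concentrated count of \Cref{lem:shells}-type arguments to improve that scale to $\mathcal{O}(1/k)$. Inside $\mathcal{B}_r(u)$, the highway count is $\mathrm{Bin}(\Theta(r^\alpha), 1/k)$ with mean $\Theta(r^\alpha / k)$; a Chernoff bound — still useful even when the mean is only $\Theta(1)$ — gives, uniformly over the relevant centers, a count of $\mathcal{O}(\tfrac{\log x}{\log\log x} + \tfrac{r^\alpha}{k})$ w.h.p.\ in $x$, and the combinatorial fact then bounds the $\mathcal{B}_r(u)$ contribution by the logarithm of that quantity. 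Assembling: for \crefpart{lem:z}{lem:z:always}, take $r = \Theta(\sqrt[\alpha]{k})$ and $x = n$, so the annulus costs $\mathcal{O}(\log\log n)$ and $\mathcal{B}_r(u)$ has $\mathcal{O}(\log n / \log\log n)$ highway nodes over all $n$ centers, costing $\mathcal{O}(\log\log n)$; for \crefpart{lem:z}{lem:z:log}, take $r = \Theta(\sqrt[\alpha]{k\log\log n})$ and $x = \log n$, use \crefpart{lem:balls}{lem:balls:log} to get $\mathcal{O}(\log\log n)$ highway nodes in $\mathcal{B}_r(u)$ over the $\mathcal{O}(\log^2 n)$ centers, costing $\mathcal{O}(\log\log\log n)$, while the $\Theta(\log\log n)$ annulus scales now each have large enough mean to use the concentrated $\mathcal{O}(1/k)$ bound; adding the far part gives the two stated bounds. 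Finally, \crefpart{lem:z}{lem:z:constant} needs only $\mathbb{E}[z(u) \mid u \in \mathcal{H}] = \tfrac1k \sum_{v \ne u} d(u,v)^{-\alpha} = \mathcal{O}(\log n / k)$ by dyadic bucketing over the whole graph, so Markov's inequality yields the constant-probability bound, and the independence for centers at distance $\ge c\sqrt[\alpha]{k}$ follows because the deviation of each $z$ is carried by highway indicators in disjoint neighborhoods of the centers, exactly as in \crefpart{lem:balls}{lem:balls:constant}.

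The step I expect to be the main obstacle is the innermost region, distances $\mathcal{O}(\sqrt[\alpha]{k})$ from $u$, where the highway count has constant mean and so does not concentrate: no high-probability bound on its exact value is available. The fix is the two-step combination above — cap the count by a weak (small-mean) Chernoff tail, then use that under fixed growth $m$ nodes near $u$ can contribute only $\mathcal{O}(\log m)$ — and the rest is bookkeeping: threading the three probability regimes ($n$, $\log n$, constant) through consistently and verifying that only the fixed-growth constants (the constant degree and the $\Theta$ in $|\mathcal{B}_\ell| = \Theta(\ell^\alpha)$) enter the union bounds, so the estimates are uniform over the entire graph family.
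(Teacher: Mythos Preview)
Your approach is essentially the paper's: split $z(u)$ at radius $w = \Theta(\sqrt[\alpha]{k\log n})$, handle the far part with the shell lemma to get $\Theta(\log n / k)$ (which also yields \crefpart{lem:z}{lem:z:lower-always}), and handle the near part by bounding how many highway nodes lie in $\mathcal{B}_w(u)$ and packing them as close to $u$ as fixed growth allows. Your dyadic-bucket lemma (``$m$ nodes contribute $\mathcal{O}(\log m)$'') is a clean abstraction of exactly what the paper does ad hoc via $\sum_{j=1}^{\Theta(\sqrt[\alpha]{m})} \Theta(j^{\alpha-1})/j^\alpha$. For part~1 the paper is actually simpler than your write-up: it just takes the $\Theta(\log n)$ highway-node count in $\mathcal{B}_w(u)$ from \crefpart{lem:balls}{lem:balls:always} and applies the packing bound once, with no additional split at $\sqrt[\alpha]{k}$ or small-mean Chernoff step. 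For part~2 both arguments split at $r = \Theta(\sqrt[\alpha]{k\log\log n})$ and use \crefpart{lem:balls}{lem:balls:log}; the paper handles the annulus more crudely by placing all $\Theta(\log n)$ remaining highway nodes at radius $r$, which already gives $\Theta(\log n / (k\log\log n))$ and is absorbed into $\log n/k$. The one genuine methodological difference is \crefpart{lem:z}{lem:z:constant}: the paper conditions on the constant-probability event of \crefpart{lem:balls}{lem:balls:constant} (no highway node within $\Theta(\sqrt[\alpha]{k})$ of $u$) and then places the $\Theta(\log n)$ highway nodes at that inner radius for a $\Theta(\log n/k)$ contribution. Your Markov argument on $\mathbb{E}[z(u)] = \mathcal{O}(\log n/k)$ gives the bound more directly, but note that the independence for centers at distance $\ge c\sqrt[\alpha]{k}$ does not follow from Markov on the full $z(u)$ --- that independence comes precisely from conditioning on disjoint inner-ball events, so your final sentence is really invoking the paper's mechanism rather than a consequence of Markov.
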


\begin{proof}
    Previous papers~\cite{kleinberg2000small,martel,gila2023highway} made
    extensive use of the rigid underlying geometry of the lattice to bound the
    normalization constant, specifically to determine the number of nodes at any
    distance from a given node.
    In our fixed-growth model, since we cannot make such strict assumptions, we
    instead sum over various shells with specific widths at different radii,
    using the results of \Cref{lem:shells}.
    First, let's consider the contribution to the normalization constant of an
    arbitrary highway node $u$ from all highway nodes at shell distance $b \geq
    1$, $z(u)_{b \geq 1}$. 
    Intuitively, we get the largest value of $z(u)_{b \geq 1}$ when all highway nodes
    in each shell are clustered as close to $u$ as they can be.
    \begin{align*}
        z(u)_{b \ge 1} &< \sum_{b = 1}^m{\frac{\text{max \# highway nodes in } \mathcal{S}_b^{(w)}(u)}{(\text{min distance to node in } \mathcal{S}_b^{(w)}(u))^\alpha}} \\
        &= \sum_{b = 1}^m{\frac{\Theta(b^{\alpha - 1} \log{n})}{(bw)^\alpha}} = \sum_{b = 1}^m{\frac{\Theta(b^{\alpha - 1} \log{n})}{b^\alpha k \log{n}}} \\
        &= \Theta\pars*{\frac{1}{k}} \sum_{b = 1}^m{\frac{1}{b}} = \Theta\pars*{\frac{\log{n}}{k}}
    \end{align*}
    This contribution is added to all the normalization constants w.h.p.
    A small adjustment to the above proofs provides results for the lower bound
    in part 4 of this lemma.
    Now let us consider the contribution of highway nodes within distance $w$ of
    $u$.
    From \Crefpart{lem:balls}{lem:balls:always} we know that there are at most
    $\Theta(\log{n})$ highway nodes within distance $w$ of $u$.
    By applying a derivative and as shown in~\cite{stefanExpansion}, there can be at
    most $\Theta(j^{\alpha - 1})$ nodes at distance $j$ from $u$.
    For our worst case analysis, we assume that all $\Theta(\log{n})$ nodes are
    packed tightly around $u$:
    \begin{equation*}
        z(u)_{b = 0} \leq \sum_{j = 1}^{\Theta(\sqrt[\alpha]{\log{n}})}{\frac{\Theta(j^{\alpha - 1})}{j^\alpha}} = \Theta(\log{\log{n}})
    \end{equation*}
    This concludes the proof for the first part of the lemma.
    For a slightly tighter bound, we limit how many nodes are packed around $u$
    using \Crefpart{lem:balls}{lem:balls:log}, where we learned that there are
    only $\Theta(\log{\log{n}})$ highway nodes within some inner distance
    $\Theta(\sqrt[\alpha]{k \log{\log{n}}})$ of $u$:
    \begin{equation*}
        z(u)_{b = 0, \text{ inner}} \leq \sum_{j = 1}^{\Theta(\log{\log{n}})}{\frac{\Theta(j^{\alpha - 1})}{j^\alpha}} = \Theta(\log{\log{\log{n}}})
    \end{equation*}
    Recall that we still have $\Theta(\log{n})$ unaccounted for highway nodes
    within distance $w$ of $u$.
    Again assuming the worst case possible, they are all at at the edge of the
    inner ball:
    \begin{equation*}
        z(u)_{b = 0, \text{ outer}} < \frac{\Theta(\log{n})}{\Theta(\sqrt[\alpha]{k \log{\log{n}}})^\alpha} = \Theta\pars*{\frac{\log{n}}{k \log{\log{n}}}}
    \end{equation*}
    This concludes the proof for the second part of the lemma.
    Finally, we prove the third part of this lemma where there are no highway
    nodes within some distance $\Theta(\sqrt[\alpha]{k})$ of $u$ which we know
    occurs with constant probability from
    \Crefpart{lem:balls}{lem:balls:constant}.
    Assuming all $\log{n}$ highway are all at distance
    $\Theta(\sqrt[\alpha]{k})$ from $u$, their total contribution would be at
    most $\Theta(\log{n} / k)$.
    \qed
\end{proof}

\subsubsection{Distance to the Highway.}

An important step of both the greedy routing algorithm and the diameter in
general is the ability to reach the highway.
From \Crefpart{lem:balls}{lem:balls:always}, we know that the maximum distance
between any arbitrary node and the nearest highway node is at most
$\mathcal{O}(\sqrt[\alpha]{k \log{n}})$.
In this section, we prove that this is a tight bound.
\begin{lemma} \label{lem:distance-to-highway}
    In any randomized highway graph $\mathcal{G}$ of FG dimensionality $\alpha$ with $k > 1 + \epsilon$, the maximum
    distance between any node and the nearest highway node is
    $\Theta(\sqrt[\alpha]{k \log{n}})$ w.h.p.
\end{lemma}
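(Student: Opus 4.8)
The plan is to establish the matching lower bound $\Omega(\sqrt[\alpha]{k\log n})$ on the max distance from a node to the nearest highway node; combined with the upper bound already noted from \crefpart{lem:balls}{lem:balls:always}, this gives the claimed $\Theta(\sqrt[\alpha]{k\log n})$ w.h.p. The idea is a second-moment / union-bound argument showing that, with high probability in $n$, some node has \emph{no} highway node within distance $c\sqrt[\alpha]{k\log n}$ for a suitably small constant $c$. First I would fix a small constant $c$ and consider a ball $\mathcal B_\ell(u)$ of radius $\ell = c\sqrt[\alpha]{k\log n}$ around a node $u$: by fixed-growth, $|\mathcal B_\ell(u)| = \Theta(\ell^\alpha) = \Theta(c^\alpha k\log n)$, so the expected number of highway nodes inside is $\Theta(c^\alpha \log n)$. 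Since highway membership is independent across nodes with probability $1/k$, the probability that $\mathcal B_\ell(u)$ is highway-free is $(1-1/k)^{|\mathcal B_\ell(u)|} \ge e^{-2|\mathcal B_\ell(u)|/k} = n^{-\Theta(c^\alpha)}$ (using $k > 1+\epsilon$ so that $\ln(1-1/k)^{-1} \le 2/k$). For $c$ small enough this probability exceeds $n^{-1+\gamma}$ for some $\gamma > 0$.

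The main obstacle is that these "highway-free ball" events are \emph{positively correlated} and overlap heavily, so a naive union bound over all $n$ centers does not directly yield that \emph{some} center is bad. To get around this I would instead pick a set $U$ of $\Theta(n / \ell^\alpha) = \Theta(n/(k\log n))$ centers that are pairwise at distance greater than $2\ell$, so that the balls $\{\mathcal B_\ell(u)\}_{u\in U}$ are pairwise disjoint; such a packing exists by a greedy argument using the fixed-growth upper bound on ball sizes. The events "$\mathcal B_\ell(u)$ is highway-free" for $u \in U$ are then mutually independent, each with probability $p = n^{-\Theta(c^\alpha)}$. The probability that \emph{every} one of them contains a highway node is $(1-p)^{|U|} \le \exp(-p|U|) = \exp(-\Theta(n^{1-\Theta(c^\alpha)} / (k\log n)))$; choosing $c$ small makes the exponent's polynomial in $n$ dominate $k\log n$ (recall $k = \Theta(\log n)$ in the canonical regime, and more generally $k = n^{o(1)}$ suffices), so this failure probability is $o(n^{-c'})$ for any constant $c'$. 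Hence w.h.p. in $n$ at least one $u \in U$ has its entire radius-$\ell$ ball free of highway nodes, meaning its distance to the highway is at least $\ell = \Omega(\sqrt[\alpha]{k\log n})$.

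Finally I would combine the two directions: the upper bound $\mathcal O(\sqrt[\alpha]{k\log n})$ on the distance from \emph{every} node to its nearest highway node holds w.h.p.\ by \crefpart{lem:balls}{lem:balls:always} (a ball of that radius contains $\Theta(\log n)$ highway nodes, hence at least one, w.h.p.\ after a union bound over all $n$ centers), and the lower bound $\Omega(\sqrt[\alpha]{k\log n})$ on the \emph{maximum} such distance holds w.h.p.\ by the packing argument above. Intersecting the two high-probability events gives that the maximum distance to the highway is $\Theta(\sqrt[\alpha]{k\log n})$ w.h.p., completing the proof. The one subtlety to handle carefully in writing this up is the constant bookkeeping: the small constant $c$ in the lower bound must be chosen after the fixed-growth constants are fixed, so that $c^\alpha$ times those constants is small enough to keep the per-ball highway-free probability polynomially large in $n$, while the packing size $|U|$ remains large enough to amplify it to a high-probability statement.
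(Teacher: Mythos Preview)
Your proposal is correct and follows essentially the same approach as the paper: cite \Crefpart{lem:balls}{lem:balls:always} for the upper bound, then for the lower bound pack $\Theta(n/\ell^\alpha)$ pairwise-disjoint balls of radius $\ell = c\sqrt[\alpha]{k\log n}$ (the paper isolates this as \Cref{lem:indep-balls}), lower-bound the per-ball probability of being highway-free via $(1-1/k)^{|\mathcal B_\ell(u)|}\ge e^{-\Theta(\ell^\alpha/k)}$ using $k>1+\epsilon$, and use independence to conclude that some ball is empty with high probability. The only cosmetic differences are that the paper phrases the last step as a limit computation rather than an explicit $\exp(-p|U|)$ bound, and that your constant ``$2$'' in $e^{-2|\mathcal B_\ell(u)|/k}$ should really be $(1+\epsilon)/\epsilon$ for general $k>1+\epsilon$; neither affects the argument.
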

\begin{proof}
    To prove this tight bound, we will prove that for any ball of radius $\ell \in
    \mathcal{O}(\sqrt[\alpha]{k \log{n}})$, for sufficiently large $n$, there will exist
    some ball of radius $\ell$ that contains no highway nodes.
    The probability of any node being a highway node is $1/k$, and the
    probability it is not a highway node is $1 - 1/k$.
    Due to the definition of the fixed-growth model, there exists some constant
    $c_1$ such that for any node $u$ in the graph, $|B_\ell(u)| \leq c_1
    \ell^\alpha$.
    For any ball of radius $\ell$, the probability that it contains no highway
    nodes is at least $(1 - 1/k)^{c_1 \ell^\alpha}$.
    Due to the fact that for any $x < 1$, $1 - x \geq e^{-\frac{x}{1 - x}}$,
    this probability is at least $e^{-\frac{c_1 \ell^\alpha}{k - 1}}$, and
    consequently the probability that there is at least one highway node in the
    ball is at most $1 - e^{-\frac{c_1 \ell^\alpha}{k - 1}}$.
    Since $1 - x \leq e^{-x}$, this probability is at most $e^{-e^{-\frac{c_1
    \ell^\alpha}{k - 1}}}$.
    From \Cref{lem:indep-balls}, we know that there are $\Theta(n /
    \ell^\alpha)$ independent balls of radius $\ell$, which is at least $c_2 n /
    \ell^\alpha$ for some constant $c_2$.
    The probability that all balls contain at least one highway node is
    at most $1 - e^{-c_2 \frac{n}{\ell^\alpha} e^{-\frac{c_1
    \ell^\alpha}{k - 1}}}$ by a union bound.

    To show that at least one of these balls will be empty, we must show that
    this probability will be 0 for sufficiently large $n$, i.e., that
    $\lim_{n \to \infty} e^{-c_2 \frac{n}{\ell^\alpha} e^{-\frac{c_1
    \ell^\alpha}{k - 1}}} = 0$.
    We can see that this limit is 0 iff $\lim_{n \to \infty} c_2
    \frac{n}{\ell^\alpha} e^{-\frac{c_1 \ell^\alpha}{k - 1}} = \infty$.
    Taking the logarithm of both sides, we see that this is equivalent to
    $\lim_{n \to \infty} \log{c_2} + \log{n} - \alpha \log{\ell} - \frac{c_1'
    \ell^\alpha}{k - 1} = \infty$, where $c_1' = \frac{c_1}{\ln{2}}$.
    This limit showing that at least one ball will be empty will be satisfied if
    $\log{\ell} + \frac{c_1' \ell^\alpha}{k - 1} = \mathcal{O}(\log{n})$.
    Now let us finally assume that $\ell \in \mathcal{O}(\sqrt[\alpha]{k \log{n}})$.
    In this case, $\log{\ell} = \mathcal{O}(\log(k \log{n})) = \mathcal{O}(\log{k} +
    \log{\log{n}})$.
    This is $\mathcal{O}(\log{n})$ for any $k \in \mathcal{O}(n)$, which is indeed any
    reasonable value of $k$.
    For the second clause, we have that $\frac{c_1' \ell^\alpha}{k - 1} =
    c_1' \frac{k}{k - 1} \mathcal{O}(\log{n})$, which is $\mathcal{O}(\log{n})$ for any $k > 1 +
    \epsilon$.
    In other words, for sufficiently large $n$, there will exist at least one
    ball of radius $\ell \in \mathcal{O}(\sqrt[\alpha]{k \log{n}})$ that contains no
    highway nodes, so the maximum distance to a highway node must be at least
    $\Omega(\sqrt[\alpha]{k \log{n}})$.
    \qed
\end{proof}

\subsection{Greedy Routing in Expectation}
The key result of the small-world models is the ability to effectively
route between any two nodes of the graph using only local information.
In the original Kleinberg model the greedy routing time was shown to be
$\mathcal{O}(\log^2{n})$~\cite{kleinberg2000small}, a result later
proven tight~\cite{martel}.
Later papers have shown that highway variants achieve better results, either
empirically~\cite{goodrich2022modeling} or theoretically~\cite{gila2023highway}.
In the latter, the authors prove that greedy routing time can be reduced to 
$\mathcal{O}(\log{n})$ for two dimensional lattices by picking $k \in \Theta(\log{n})$.
In this section we generalize these results to fixed-growth graphs that are not
restricted to lattices, over any dimensionality $\alpha \ge 1$ including
non-integral values, and prove that these bounds are tight.
\begin{theorem} \label{thm:greedy-routing-exp} In any fixed-growth graph
    $\mathcal{G}$ with FG dimensionality $\alpha$ and highway constant $k \in
    \Theta(\log{n})$, greedy routing between two arbitrary nodes $s$ and $t$ succeeds in $\Theta(\log{n})$ hops with constant probability, if $d(s, t) =
    \Theta(\sqrt[\alpha]{n})$.
\end{theorem}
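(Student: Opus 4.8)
The plan is to decompose the greedy route into three phases and bound each separately, following the structure of \cite{gila2023highway} but adapted to shells rather than tiled diamonds. \textbf{Phase 1 (reaching the highway):} From the source $s$, greedy routing first walks along local contacts to the nearest highway node. By \Cref{lem:distance-to-highway}, this nearest highway node is at distance $\Theta(\sqrt[\alpha]{k\log n}) = \Theta(\sqrt[\alpha]{\log^2 n})$ since $k \in \Theta(\log n)$, so this phase costs $\Theta(\sqrt[\alpha]{\log^2 n}) = \mathcal{O}(\log n)$ hops. The symmetric statement handles the final descent from the highway to $t$ in \textbf{Phase 3}, again $\mathcal{O}(\log n)$ hops (this uses that $t$'s neighborhood has a highway node nearby w.h.p.\ and that once a long-range contact lands in the ball of radius $\Theta(\sqrt[\alpha]{k\log n})$ around $t$, local steps finish the job).

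\textbf{Phase 2 (traveling along the highway)} is the heart of the argument. Here I would track the current highway node's distance to $t$, call it $r$, and argue that in $\mathcal{O}(1)$ expected highway hops this distance halves (or, more carefully, drops below $r/2$). Fix a highway node $u$ at distance $r \in [\,\Theta(\sqrt[\alpha]{k\log n}),\, \Theta(\sqrt[\alpha]{n})\,]$ from $t$. The target region is the ball $\mathcal{B}_{r/2}(t)$; I want to lower-bound the probability that one of $u$'s $\Theta(k)$ long-range contacts lands in a ``good'' sub-ball close to $t$ — specifically a ball of radius $\Theta(r)$ that is contained in $\mathcal{B}_{r/2}(t)$ and at distance $\mathcal{O}(r)$ from $u$. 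By fixed-growth, such a ball contains $\Theta(r^\alpha)$ nodes, hence $\Theta(r^\alpha/k)$ highway nodes w.h.p.\ by \Crefpart{lem:balls}{lem:balls:always} (valid since $r = \Omega(\sqrt[\alpha]{k\log n})$). Each such highway node $v$ is hit by a given long-range contact of $u$ with probability $d(u,v)^{-\alpha}/z(u) = \Theta(r^{-\alpha})/z(u)$, and by \Crefpart{lem:z}{lem:z:always} with $k \in \Theta(\log n)$ we have $z(u) = \mathcal{O}(\log\log n)$. So a single long-range contact succeeds with probability $\Omega\!\big(\frac{r^\alpha/k \,\cdot\, r^{-\alpha}}{\log\log n}\big) = \Omega\!\big(\frac{1}{k\log\log n}\big) = \Omega\!\big(\frac{1}{\log n \log\log n}\big)$; with $\Theta(k) = \Theta(\log n)$ independent long-range contacts, $u$ has a contact in the good region with probability $\Omega(1/\log\log n)$. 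Thus the expected number of highway nodes visited before the distance to $t$ halves is $\mathcal{O}(\log\log n)$, and since there are $\mathcal{O}(\log r) = \mathcal{O}(\log n)$ halvings to bring $r$ down from $\Theta(\sqrt[\alpha]{n})$ to $\Theta(\sqrt[\alpha]{k\log n})$, Phase 2 takes $\mathcal{O}(\log n \log\log n)$ expected highway hops — which is not yet $\mathcal{O}(\log n)$.

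To shave the extra $\log\log n$ I would use the sharper normalization bounds: \Crefpart{lem:z}{lem:z:constant} gives $z(u) = \mathcal{O}(\log n / k) = \mathcal{O}(1)$ with constant probability, in which case one long-range contact lands in the good region with probability $\Omega(1)$, so a halving step succeeds in $\mathcal{O}(1)$ expected hops. The subtlety flagged in \Cref{lem:balls} and \Cref{lem:z} is independence: these constant-probability events are only independent across highway nodes separated by $\Omega(\sqrt[\alpha]{k})$, but consecutive highway nodes on the greedy path are typically $\Omega(\sqrt[\alpha]{k})$ apart (or can be thinned to be so), so along the $\mathcal{O}(\log n)$-length sequence of halving ``checkpoints'' a constant fraction of checkpoints independently enjoy $z = \mathcal{O}(1)$, and a Chernoff/Markov argument over these gives $\mathcal{O}(\log n)$ total hops with constant probability. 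Combining the three phases and summing, the total is $\Theta(\log n)$ hops with constant probability; the matching lower bound $\Omega(\log n)$ follows from \Cref{lem:distance-to-highway} (the first phase alone forces $\Omega(\sqrt[\alpha]{\log^2 n})$... more precisely, from a counting/entropy argument that a constant-degree graph with $\Theta(\log n)$-sized neighborhoods needs $\Omega(\log n)$ hops to cover distance $\Theta(\sqrt[\alpha]{n})$, since each highway hop advances the ``potential'' $\log(\text{distance to }t)$ by only $\mathcal{O}(1)$ in expectation). \textbf{The main obstacle} is making the independence argument in Phase 2 rigorous — the greedy path is adaptively chosen, the set of highway nodes is random, and the normalization constants $z(u)$ are correlated across nearby nodes — so the cleanest route is to condition on the high-probability event from \Cref{lem:balls}/\Cref{lem:z} fixing the global highway density, expose long-range contacts only along the actual path, and carefully invoke the spacing condition to recover enough independence among the constant-probability $z(u) = \mathcal{O}(1)$ events.
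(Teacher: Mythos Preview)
Your three-phase decomposition matches the paper's, and your Phase~2 analysis correctly identifies the $\log\log n$ bottleneck coming from the w.h.p.\ bound on $z(u)$ and correctly reaches for \Crefpart{lem:z}{lem:z:constant} to remove it. But there are two genuine gaps.

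\textbf{Phase 1 uses the wrong model.} In the constant-probability theorem, non-highway nodes do \emph{not} know the direction of the nearest highway node---that extra information is added only for \Cref{thm:greedy-routing-hp}. Greedy routing from $s$ simply walks toward $t$ along local contacts; each fresh node encountered is a highway node independently with probability $1/k$, so the highway is hit in expected $\Theta(k)=\Theta(\log n)$ hops, and with constant probability this takes at least $\Theta(k)$ hops (the first $\Theta(k)$ nodes are all non-highway with probability $(1-1/k)^{\Theta(k)}=\Theta(1)$). That geometric-distribution observation is both the paper's upper bound and its lower bound for Phase~1. Your appeal to \Cref{lem:distance-to-highway} instead yields $\Theta(\sqrt[\alpha]{k\log n})=\Theta(\sqrt[\alpha]{\log^2 n})$, which for $\alpha<2$ is $\omega(\log n)$ and already breaks the claimed $\Theta(\log n)$; the same overshoot recurs in your Phase~3, whereas the paper finishes from within distance $\mathcal{O}(\sqrt[\alpha]{k})$ via \Cref{cor:closer-contacts-exp}.

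\textbf{Phase 2 is missing the ``stay on the highway'' step.} Your expected-hops-per-halving calculation tacitly assumes that between successful halvings the current node is always a highway node. But if none of $u$'s long-range contacts beat its best local contact, greedy routing falls off the highway and you would pay a re-entry cost at every such failure. The paper closes this with \Cref{lem:closer-contacts} and \Cref{cor:closer-contacts-exp}: even when no factor-$c$ jump is available, some long-range contact of $u$ is still strictly closer to $t$, so the process remains on $\mathcal H$ all the way down to distance $\mathcal{O}(\sqrt[\alpha]{k})$ from $t$, at only a constant-factor cost in total hops. Your spacing/thinning sketch for independence of the $z(u)=\mathcal{O}(1)$ events is the right ingredient---the paper uses exactly the $\Omega(\sqrt[\alpha]{k})$ separation condition from \Crefpart{lem:balls}{lem:balls:constant}---but without first guaranteeing that every intermediate hop has a closer long-range contact, the sequence of fresh highway checkpoints you want to run Chernoff over may not exist.
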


% There are also two slight variants of $\mathcal{G}$, one where each non-highway
% node has no knowledge of the highway network, and another where each non-highway
% node stores the location of the nearest highway node.
% We prove tight bounds for both of these variants and in two parts---the upper
% and lower bounds---using three steps:
\begin{proof}
    We prove tight bounds for greedy routing considering three steps:
    \begin{enumerate}
        \item Reaching the highway using local contacts. \label{greedy-routing:step-1}

        \item Navigating along the highway using long-range contacts. \label{greedy-routing:step-2}
        
        \item Reaching the destination from the highway using local contacts. \label{greedy-routing:step-3}
    \end{enumerate}
    \crefname{enumi}{step}{steps}
    \Crefname{enumi}{Step}{Steps}
    %
    % For the upper bound, let us assume that non-highway nodes have no knowledge of
    % the highway network.
    For \cref{greedy-routing:step-1}, by greedily walking using local contacts
    towards $t$ and because $k \in \Theta(\log n)$, we expect to 
    reach the highway in $\Theta(k)$ hops.
    Note that without any knowledge of the highway network, there is always a
    constant probability that we are unable to reach the highway in only $\Theta(k)$
    hops, serving as a trivial lower bound for this form of greedy routing.

    Once we reach the highway, we use long-range contacts to navigate the highway
    towards $t$ in \cref{greedy-routing:step-2}.
    From \Cref{lem:prob-halving}, we know that the probability that a long-range
    contact of a highway node $u$ improves its distance to $t$ by a
    factor of $c$ is at least proportional to $[(c + 1)^\alpha z(u)]^{-1}$ for some
    large enough constant $c$.
    We can improve our distance by a factor of $c$ at most $\log_c{d(s, t)} =
    \mathcal{O}(\log_c{\sqrt[\alpha]{n}}) = \mathcal{O}(\log{n})$ times
    before reaching $t$.
    If we assume that we are always able to use our tightest normalization
    constant bound from \Crefpart{lem:z}{lem:z:constant}, we expect $z(u)$ and
    consequently the probability of a factor $c$ improvement to be at most
    constant, and consequently expect to make $\mathcal{O}(\log{n})$ hops along
    the highway.
    This assumption is unreasonable, however, since it only holds with constant
    probability.
    Instead, we show in \Cref{lem:closer-contacts} that even when there is no factor of $c$
    improvement, we expect there to be long-range contacts with
    improvements in distance that allow us to visit enough fresh nodes to reach
    distance $\mathcal{O}(\sqrt[\alpha]{k})$ from $t$.
    Doing so incurs only a constant factor increase in the number of hops.
    Note that it is important to find a \textit{long-range contact} which
    improves our distance, as this guarantees that we are able to remain on the
    highway.
    Finally, we trivially complete \cref{greedy-routing:step-3} in
    $\mathcal{O}(\sqrt[\alpha]{k})$ time.
    \qed
\end{proof}

\subsection{Greedy Routing with High Probability}
In this section
, we go one step beyond several of the previous
analyses~\cite{kleinberg2000small,martel,gila2023highway} and 
prove high
probability bounds for greedy routing in fixed-growth graphs.
In order to do this, we make a small adjustment to the model, adding a constant
number of information to each non-highway node.
Specifically, each non-highway node knows which of its neighbors is the closest
to the highway, allowing it to reach the highway using the fewest number of
hops.
\begin{theorem} \label{thm:greedy-routing-hp}
    Let $\mathcal{G}$ be a randomized highway graph with FG dimensionality
    $\alpha$ and highway constant $k \in \Theta(\log{n})$.
    If $d(s, t) =
    \Theta(\sqrt[\alpha]{n})$, then
    greedy routing between any two nodes $s$ and $t$ succeeds in $\Theta(\log{n})$ hops with high
    probability in $\log{n}$ if $\alpha \geq 2$, and in
    $\Theta(\sqrt[\alpha]{\log^2{n}})$ hops if $\alpha \leq 2$.
    % Furthermore, 
\end{theorem}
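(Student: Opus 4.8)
The plan is to keep the three-stage decomposition from the proof of \Cref{thm:greedy-routing-exp} --- (i) reach the highway along local contacts, (ii) traverse the highway along long-range contacts until within distance $\mathcal O(\sqrt[\alpha]{k})$ of $t$, and (iii) descend to $t$ along local contacts --- and upgrade every ``in expectation / with constant probability'' estimate to a concentrated one. Stage (i) is the only place the augmented model is used: since each non-highway node knows which of its neighbours is closest to $\mathcal H$, Stage (i) takes exactly $d(s,\mathcal H)$ hops, which by the upper bound of \Crefpart{lem:balls}{lem:balls:always} is $\mathcal O(\sqrt[\alpha]{k\log n}) = \mathcal O(\sqrt[\alpha]{\log^2 n})$ with high probability in $n$; \Cref{lem:distance-to-highway} shows some source attains this, which is what produces the $\sqrt[\alpha]{\log^2 n}$ term. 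Stage (iii) is deterministic: once a highway node within distance $\mathcal O(\sqrt[\alpha]{k}) = \mathcal O(\sqrt[\alpha]{\log n})$ of $t$ has been reached, greedily following local contacts toward $t$ decreases $d(\cdot,t)$ by one each hop, so it finishes in $\mathcal O(\sqrt[\alpha]{\log n})$ hops. For $\alpha \ge 2$ both of these are $\mathcal O(\log n)$; for $\alpha \le 2$ Stage (i) dominates at $\Theta(\sqrt[\alpha]{\log^2 n})$.

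Stage (ii) is the crux. I would partition the distances to $t$ in $[\,\Theta(\sqrt[\alpha]{k}),\,\Theta(\sqrt[\alpha]{n})\,]$ into $\Theta(\log n)$ geometric scales of ratio $c$ (the constant of \Cref{lem:prob-halving}) and show that $\Theta(\log n)$ highway hops cross all of them with high probability in $\log n$. As a first, crude pass, the worst-case bound $z(u) = \mathcal O(\log\log n)$ of \Crefpart{lem:z}{lem:z:always} together with \Cref{lem:prob-halving,lem:closer-contacts} already shows Stage (ii) terminates within $\mathcal O(\log n \log\log n) = \mathcal O(\log^2 n)$ hops except with probability $n^{-\Omega(1)}$, so the highway path visits only $\mathcal O(\log^2 n)$ distinct highway nodes. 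Because $z(u)$ depends only on where the highway nodes lie --- not on the long-range edges --- I may reveal $\mathcal H$ first and then invoke \Crefpart{lem:z}{lem:z:log} on those $\mathcal O(\log^2 n)$ path nodes, so that every visited highway node has $z(u) = \mathcal O\!\big(\tfrac{\log n}{k} + \log\log\log n\big) = \mathcal O(\log\log\log n)$ with high probability in $\log n$. The key quantitative point is that for $k=\Theta(\log n)$ the expected value of $z(u)^{-1}$ is $\Theta(1)$ (most nodes have $z(u)=\Theta(1)$; only an $\mathcal O(1/\log n)$ fraction are as large as $\log\log n$), so by \Cref{lem:prob-halving} the probability that a highway hop makes a factor-$c$ improvement toward $t$ is $\Theta(1)$ in expectation over the fresh node reached, while \Cref{lem:closer-contacts} guarantees the remaining hops still advance to fresh, strictly-closer highway nodes; the $\mathcal O(\log\log\log n)$ worst-case bound is then used only to control the lower tail of this per-hop improvement probability. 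Appealing to the disjoint-ball/shell structure of \Cref{lem:balls,lem:shells} to make successive per-scale hop counts behave like independent geometric random variables of constant parameter, a Chernoff/Azuma bound over the $\Theta(\log n)$ scales shows their total is $\Theta(\log n)$ with high probability in $\log n$; the final $\mathcal O(\log\log n)$ scales --- where only $\mathcal O(\log\log n)$ highway nodes remain near $t$ --- add only $\mathcal O(\log\log n \cdot \log\log\log n) = o(\log n)$ hops.

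Adding the stages gives $\Theta(\log n) + \Theta(\sqrt[\alpha]{\log^2 n}) + \Theta(\sqrt[\alpha]{\log n})$, which is $\Theta(\log n)$ for $\alpha \ge 2$ and $\Theta(\sqrt[\alpha]{\log^2 n})$ for $\alpha \le 2$; the matching lower bounds follow from the $\Omega(\log n)$ scales that must be crossed on the highway and, for $\alpha \le 2$, from \Cref{lem:distance-to-highway}. The step I expect to be hardest is the independence bookkeeping in Stage (ii): the greedy path is adaptively revealed by exactly the long-range edges whose randomness the concentration bound exploits, so the exposure of $\mathcal H$ and of the long-range contacts must be ordered so that each newly visited highway node is genuinely fresh (using that the long-range contact chosen at a node at distance $d$ from $t$ lies, with constant probability, far from everything examined so far), while keeping every union bound at the $\polylog$ scale --- this is why the endgame near $t$ yields only ``high probability in $\log n$'' rather than ``in $n$'', and why one must carefully check that the $z(u)=\mathcal O(\log\log\log n)$ slack and the failed-improvement hops are absorbed into the constant of $\Theta(\log n)$ instead of inflating the bound by a factor polynomial in $\log\log n$.
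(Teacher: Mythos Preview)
Your upper-bound plan is close in spirit to the paper's (which is terse here, essentially importing the expectation argument of \Cref{thm:greedy-routing-exp} together with the high-probability version of \Cref{lem:closer-contacts}); your two-pass bootstrap through \Crefpart{lem:z}{lem:z:log} is more elaborate than what the paper actually uses, but not unreasonable. One small slip: the high-probability statement of \Cref{lem:closer-contacts} only brings you within distance $\mathcal O(\sqrt[\alpha]{k\log n})$ of $t$, not $\mathcal O(\sqrt[\alpha]{k})$; the latter is \Cref{cor:closer-contacts-exp}, which holds only in expectation. This does not change the final bound, since Stage~(i) already costs $\Theta(\sqrt[\alpha]{k\log n})$.

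The real gap is your lower bound for $\alpha \ge 2$. Saying ``$\Omega(\log n)$ scales must be crossed on the highway'' does not give $\Omega(\log n)$ hops: a single long-range hop can jump across arbitrarily many geometric scales, since the factor of improvement $x_i$ at hop $i$ is unbounded and the number of scales crossed by that hop is essentially $\log x_i$. Your geometric-random-variable picture for per-scale hop counts addresses only the \emph{upper} tail (a scale can take many hops to exit); per-scale counts can equally well be zero, so their sum need not be $\Omega(\log n)$. The paper devotes most of this proof to precisely this point: writing $\sum_i \log x_i = \log d_0 - \log d_f = \Theta(\log n)$, it shows $\mathbb E[\log x] = \mathcal O(1)$ via \Cref{cor:prob-halving-upper}, which gives $\Pr(x \ge c) = \mathcal O((c-1)^{-\alpha})$, whence $\int_{c_0}^\infty (\log c)\, c^{-2}\,dc < \infty$ for all $\alpha \ge 2$ (this integral is exactly why the threshold sits at $\alpha = 2$). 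A Chernoff bound on the hop count then yields $f = \Omega(\log n)$ with high probability. You need to supply this argument, or an equivalent one bounding the expected log-improvement per hop; as stated, your lower bound for $\alpha > 2$ is unsupported.
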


\begin{proof}
    \crefname{enumi}{step}{steps}
    \Crefname{enumi}{Step}{Steps}
    The upper bound of this proof is very similar to the proof in expectation,
    with two main differences.
    First, in order to reach the highway for \cref{greedy-routing:step-1}, we
    make use of the information stored at each non-highway node, and due to the
    results of \Cref{lem:distance-to-highway}, we know that we can reach the
    highway in $\mathcal{O}(\sqrt[\alpha]{k \log{n}})$ hops with high
    probability.
    Second, while navigating the highway in \cref{greedy-routing:step-2}, we use
    the high probability result of \Cref{lem:closer-contacts} reaching only
    within distance $\mathcal{O}(\sqrt[\alpha]{k \log{n}})$ of $t$ rather than
    within distance $\mathcal{O}(\sqrt[\alpha]{k})$ as used in the expectation
    proof.

    For $\alpha \leq 2$, the $\Theta(\sqrt[\alpha]{k \log{n}})$ cost to reach
    the highway is $\Omega(\log{n})$ and so the bound is trivially tight.
    For larger values of $\alpha$, however, we must show that the time spent
    navigating the highway is greater than the time spent reaching the highway,
    i.e., that the time spent navigating the highway is $\Omega(\log{n})$.

    Consider a sequence of $f$ greedy hops along the highway, where $d_0$ and $d_f$
    are the distances from the first (resp. last) highway nodes on the path to
    $t$.
    % As such, $f$ is the number of hops
    % it took to route $s$ to $t$.
    In the best case, there always exists a long-range contact which improves
    our distance until we reach $d_f = \Theta(\sqrt[\alpha]{k \log{n}})$.
    Each hop improves our distance to $t$ by some factor.
    Let $x_i$ refer to the factor of improvement for the $i$-th hop (in greedy routing, it is always true that $x_i > 1$). Working
    backwards, we have that $d_f \times x_f \times x_{f - 1} \times \cdots
    \times x_0 = d_0$.
    Rearranging, we have that $\prod_i{x_i} = d_0 / d_f$.
    Taking the log of both sides, we have that $\sum_i{\log{x_i}} =
    \log{d_0} - \log{d_f}$.
    Since the first highway node is within distance $\mathcal{O}(\sqrt[\alpha]{k
    \log{n}})$ of $s$ and $d(s, t) = \Theta(\sqrt[\alpha]{n})$, then $\log{d_0}
    - \log{d_f} = \Theta(\log{n})$, and consequently $\sum_i{\log{x_i}} =
    \Theta(\log{n})$.
    If we can upper bound the expectation of $\log{x}$, $\mathbb{E}[\log{x}]$,
    then by linearity of expectation, we have that
    $\mathbb{E}[\sum_i{\log{x_i}}] = f \mathbb{E}[\log{x}]$.
    What is left to show is that $\mathbb{E}[\log{x}] = \mathcal{O}(1)$, such
    that the expected number of hops $\mu$ on the highway is $\Omega(\log{n})$.
    It is then possible to use a Chernoff bound to show that the probability of
    the number of hops being significantly less than $\mu$ is small.

    Now fix some constant $c_0 \geq 2$. 
    Due to the results of \Cref{cor:prob-halving-upper}, we know that the
    probability that some $x \geq c$ is at most proportional to $(c -
    1)^{-\alpha}$ for $c \geq c_0$ for some constant $c_0$.
    Note that we assume generously that $z(u)$ is always in $\Theta(1)$.
    When $x \leq c_0$, since $x$ is bounded by a constant, the contribution of
    $\log{x}$ to the sum is also at most constant.
    Now we consider $x > c_0$.
    Due to the definition of expectation, $\mathbb{E}_{x >
    c_0}[\log{x}] \leq \int_{c_0}^{\infty}{\log{c} \cdot \Pr(x = c) \text{ } dc}$, loosely
    summing as $c \to \infty$ even though $c$ can be more tightly bounded by
    $d_0 / d_f$.
    $\Pr(x = c)$ is certainly no greater than $\Pr(x \geq c) =
    \mathcal{O}(c^{-\alpha}) \leq \mathcal{O}(c^{-2})$ for all $\alpha \geq 2$.
    Finally, our results follow from the fact that
    $\int_{c_0}^{\infty}{\frac{\log{c}}{c^2} \text{ } dc} =
    \mathcal{O}(1)$.
    \qed
\end{proof}

\subsection{Diameter}

In this section we prove tight bounds on the diameter of randomized highway
graphs with fixed growth.
\begin{theorem} \label{thm:diameter}
    Let $\mathcal{G}$ be a randomized highway graph with FG dimensionality
    $\alpha$ and highway constant $k \in \Theta(\log{n})$.
    The diameter of $\mathcal{G}$ is $\Theta(\frac{\log{n}}{\log{\log{n}}})$ if
    $\alpha > 2$, and $\Theta(\sqrt[\alpha]{\log^2{n}})$ if $\alpha \leq 2$.
\end{theorem}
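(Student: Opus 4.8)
The plan is to prove matching upper and lower bounds; the lower bound will come from two independent obstructions and the upper bound from a breadth-first frontier-growth argument on the highway.

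For the lower bound I would argue two things. First, every node has out-degree $\mathcal{O}(1)+\Theta(k)=\mathcal{O}(\log n)$ --- its $\mathcal{O}(1)$ local contacts plus, at highway nodes, $\Theta(k)$ long-range contacts --- so the set of nodes reachable from any node within $\ell$ hops has size at most $(\mathcal{O}(\log n))^{\ell}$; forcing this past $n$ gives $\ell=\Omega(\log n/\log\log n)$. Second, by \Cref{lem:distance-to-highway} some node $s$ lies at distance $\Omega(\sqrt[\alpha]{k\log n})=\Omega(\sqrt[\alpha]{\log^2 n})$ from every highway node, and any path from $s$ to a node $t$ with $d(s,t)=\Theta(\sqrt[\alpha]{n})$ either uses no long-range contact (hence has length $\Theta(\sqrt[\alpha]{n})$) or starts with a prefix of local contacts of length $\Omega(\sqrt[\alpha]{\log^2 n})$ before its first highway node; either way its length is $\Omega(\sqrt[\alpha]{\log^2 n})$. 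Taking the larger of the two bounds yields $\Omega(\log n/\log\log n)$ when $\alpha>2$ and $\Omega(\sqrt[\alpha]{\log^2 n})$ when $\alpha\le 2$, since $\sqrt[\alpha]{\log^2 n}=\log^{2/\alpha} n$ dominates $\log n/\log\log n$ exactly when $\alpha\le 2$.

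For the upper bound I would prove the diameter is $\mathcal{O}(\sqrt[\alpha]{\log^2 n}+\log n/\log\log n)$, which reduces to the claimed bound in each regime. Fixing $s$ and $t$, \Cref{lem:distance-to-highway} lets us reach a highway node $s'$ from $s$, and $t$ from some highway node $t'$, each in $\mathcal{O}(\sqrt[\alpha]{k\log n})=\mathcal{O}(\sqrt[\alpha]{\log^2 n})$ hops, so it suffices to connect $s'$ to $t'$ by long-range contacts in $\mathcal{O}(\log n/\log\log n)$ hops. I would run a BFS from $s'$ along long-range contacts, tracking the reached highway set $F_0\subseteq F_1\subseteq\cdots$, and establish a per-hop growth estimate. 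Each highway node has $\Theta(k)=\Theta(\log n)$ long-range contacts; combining the normalization-constant bound of \Cref{lem:z} with the shell counts of \Cref{lem:shells}, these contacts spread over distance scales so that $\Omega(\log n/\log\log n)$ of them land beyond the node's radius-$\Theta(\sqrt[\alpha]{k\log n})$ ball, which by \Cref{lem:balls} already contains $\Theta(\log n)$ highway nodes. Hence, as long as $|F_i|$ is bounded by a small polynomial in $n$, the frontier grows by a factor of at least $\Omega(\log n/\log\log n)$ per hop --- either by filling in the many barely-touched balls around the current frontier, or by carrying the frontier into fresh regions at every distance scale. After $\mathcal{O}(\log n/\log\log n)$ hops, $|F_i|$ reaches a constant fraction of all $\Theta(n/\log n)$ highway nodes, spread uniformly enough that $\mathcal{O}(1)$ more hops cover every highway node, in particular $t'$. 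Finally I would union-bound the uses of Lemmas~\ref{lem:balls},~\ref{lem:shells} and~\ref{lem:z} over the $\mathcal{O}(n^2)$ choices of $(s,t)$ and the $\mathcal{O}(\log n)$ hops --- using their ``with high probability in $n$'' forms, with the exponents taken large enough --- so the whole bound holds with high probability in $n$.

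I expect the frontier-growth step to be the main obstacle. The crux is to show the per-hop multiplicative growth is genuinely $\Omega(\log n/\log\log n)$ and not merely constant --- this is exactly where having $\Theta(\log n)$ contacts per highway node, rather than $\Theta(1)$ as in Kleinberg's model, buys the $\log\log n$ improvement --- while at the same time controlling the dependence between $z(u)$ and which nodes its contacts actually hit, bounding collisions among a single node's contacts inside its $\Theta(\log n)$-node local ball (a birthday-type estimate), and guaranteeing that the spread eventually reaches the neighborhood of $t$. Keeping all of these valid simultaneously with high probability over every pair, and verifying that regions which have already saturated do not stall the growth, is the technical heart of the proof.
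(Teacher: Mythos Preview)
Your lower bound is essentially the paper's: the out-degree cap gives $\Omega(\log n/\log\log n)$ and \Cref{lem:distance-to-highway} gives $\Omega(\sqrt[\alpha]{\log^2 n})$, with the larger one winning according to whether $\alpha\le 2$.

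Your upper bound has the right skeleton---reach the highway, then exploit the $\Theta(k)$ long-range contacts per highway node to get a $\log n/\log\log n$ branching factor---but the endgame has a real gap. You run a \emph{one-sided} forward BFS from $s'$ and assert that once $|F_i|$ is a constant fraction of the highway, ``$\mathcal{O}(1)$ more hops cover every highway node, in particular $t'$.'' Neither clause is justified. Pushing the forward growth all the way to a constant fraction of $n_{\mathcal H}$ already runs into the saturation you flag: once $|F_i|$ is this large, most long-range contacts collide with $F_i$ and the $\log n/\log\log n$ growth rate you extract from \Cref{lem:z} and \Cref{lem:shells} no longer holds. And even granting a constant fraction, the claim that $\mathcal{O}(1)$ further hops hit \emph{every} remaining highway node is precisely the hard part: for each of the $\Theta(n/\log n)$ possible targets $t'$ you would need some directed edge from $F_i$ to land on $t'$ with failure probability small enough to survive a union bound over all $t'$ and all $(s,t)$ pairs, and a back-of-the-envelope count (worst-case $t'$ at distance $\Theta(\sqrt[\alpha]{n})$ from $F_i$, with $z=\Theta(\log\log n)$) gives only $\Theta(1/\log\log n)$ expected hits on $t'$.

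The paper avoids both problems with the standard \emph{two-sided} argument: grow a forward set $S$ from $s$ and a backward set $T$ from $t$ (highway nodes that can reach $t$ in few hops), each only to size $n_{\mathcal H}^\eta$ for a constant $\eta\in(3/4,1)$, and then show a single long-range edge crosses from $S$ to $T$. Keeping $\eta<1$ preserves the growth rate, since the occupied region has radius at most $\sqrt[\alpha]{n^\theta}$ for some $\theta<1$ and \Cref{lem:contacts-out} applies; having $2\eta>1$ makes the crossing probability at least $1-n^{-\sqrt n}$, easily absorbing the union bound over all pairs. Crucially, because long-range contacts are \emph{directed}, growing $T$ requires controlling incoming edges, which is what \Cref{cor:contacts-in} provides; your one-sided plan never invokes this and has no mechanism to guarantee the specific node $t'$ is reached with the required probability.
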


\begin{proof}
    The lower bound for the diameter is straightforward---from
    \Cref{lem:distance-to-highway}, there exists some node that is
    $\Omega(\sqrt[\alpha]{k \log{n}})$ hops from the highway, which is
    sufficient for $\alpha \leq 2$.
    For larger $\alpha$, we can reach the highway quickly, and the bottleneck
    rather becomes the time spent navigating the highway, $\mathcal{H}$.
    Let us consider the diameter of $\mathcal{H}$, starting from an arbitrary
    highway node $u$. 
    For a lower bound, we consider the most advantageous configuration.
    $u$ has $\Theta(k)$ long-range contacts, all to highway nodes, and
    $\Theta(1)$ local contacts, which we generously assume are
    also to highway nodes.
    Also, let us assume that, impossibly, all the $\Theta(k)$ contacts of
    all already-visited highway nodes are to new highway nodes at every step.
    Ignoring constant factors, there is at most a factor of $k$ increase in the
    number of nodes visited at each step.
    Since there are $\Theta(n/k)$ highway nodes, we can lower bound the amount
    of hops $f$ spent navigating the highway by:
    \begin{equation*}
        k^f \geq \frac{n}{k} \implies f \geq \frac{\log{n} - \log{k}}{\log{k}}
    \end{equation*}
    which is $\Omega(\frac{\log{n}}{\log{\log{n}}})$ when $k \in
    \Theta(\log{n})$.

    What remains to show is that it is indeed possible, with high
    probability, to navigate the highway $\mathcal{H}$ in
    $\mathcal{O}(\frac{\log{n}}{\log{\log{n}}})$ hops.
    In the standard approach for upper bounding the diameter~\cite{martel}, we
    build two large-enough sets $S$ and $T$ surrounding the source and
    destination nodes, respectively, and show that there exists an edge from $S$
    to $T$ with high probability.
    $S$ is a set of nodes that can reach be reached from $s$
    in at most
    $\ell$ hops (including taking long-range contacts), while $T$ can be thought of as its inverse, a set
    of nodes that can reach $t$ in at most $\ell$ hops.
    See \Cref{fig:diameter}.
    In our case, we are interested in the highway subgraph $\mathcal{H}$, and
    therefore construct sets $S$ and $T$ using only highway nodes and
    long-range contacts.

    Let us consider the growth of set $S$, referring to all the ``fresh'' nodes
    added at step $i$ as $\phi_i$, with $\phi_0 = \Theta(\log{n})$ consisting of the
    highway nodes within distance $\Theta(\sqrt[\alpha]{k \log{n}})$ of $s$.
    In the worst case, all the nodes in $\phi_i$ are as far away from other highway
    nodes as possible.
    Let $|S|$ be the final size of set $S$.
    If they were all bunched up together as close as possible, their maximum
    radius in the underlying graph $\mathcal{F}$ would be
    $\Theta(\sqrt[\alpha]{k |S|})$ as determined from
    \Crefpart{lem:balls}{lem:balls:always}.
    In order to reach $\Theta(n_\mathcal{H}^\eta)$ highway nodes, where
    $n_\mathcal{H} = \Theta(n / k)$ is the total number of highway nodes and
    $\eta < 1$ is some constant, we must only consider distances from $\phi$ up
    to $\Theta(\sqrt[\alpha]{k n_\mathcal{H}^\eta}) =
    \mathcal{O}(\sqrt[\alpha]{n^\theta})$ for some other constant $\theta < 1$.
    Using \Cref{lem:contacts-out} we state that the probability
    of a long-range contact $v$ of a highway node $u$ in $\phi_i$ to be fresh
    is at least proportional to $\frac{\log{n}}{k z(u)}$.
    Using the pessimistic normalization constant bound of
    \Crefpart{lem:z}{lem:z:always}, and the fact that $k \in \Omega(\log{n})$,
    the probability that $v$ is fresh is at least proportional to
    $\frac{\log{n}}{k \log{\log{n}}}$.
    The expected number of fresh nodes added at step $i$, $|\phi_{i + 1}|$,
    considering that each node in $\phi_i$ has $\Theta(k)$ long-range
    contacts, is at least proportional to $|\phi_i|
    \frac{\log{n}}{\log{\log{n}}}$.
    Recalling that the number of nodes in $|\phi_i|$ is always
    $\Omega(\log{n})$, the probability that there are asymptotically fewer fresh
    highway nodes is at most $n^{-\frac{\log{n}}{\log{\log{n}}}}$ using the
    simplified Chernoff bounds of~\cite{DILLENCOURT2023106397}.
    Equipped with a growth factor of $\frac{\log{n}}{\log{\log{n}}}$, we can
    apply similar reasoning to the diameter lower bound to state that we reach
    $n_\mathcal{H}^\eta$ total highway nodes in at most
    $\mathcal{O}(\frac{\log{n}}{\log{\log{n}} - \log{\log{\log{n}}}}) =
    \mathcal{O}(\frac{\log{n}}{\log{\log{n}}})$ hops.
    A similar argument along with the results of \Cref{cor:contacts-in} can
    grow set $T$ to similar magnitude.

    \begin{figure}[t]
        \centering
        \includegraphics[width=0.8\linewidth]{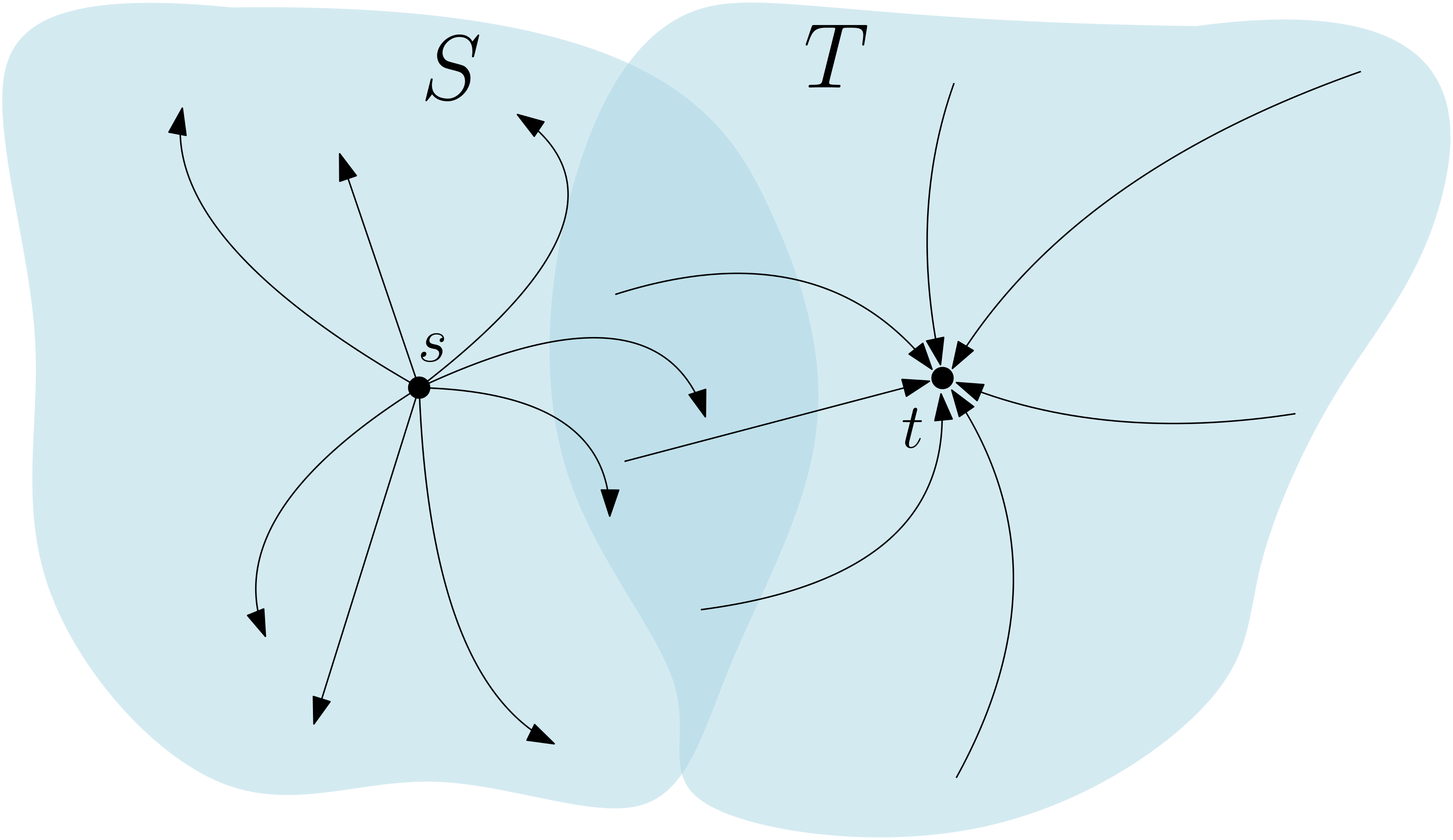}
        \caption{
            A visualization of the two sets $S$ and $T$.
            $S$ is the set of highway nodes that can be efficiently reached from
            the source $s$, while $T$ is the set of highway nodes that can
            efficiently reach the destination $t$.
            We show that these two sets intersect with high probability, and
            thus $s$ can route to $t$ efficiently.
        }
        \label{fig:diameter}
        \vspace*{-\medskipamount}
    \end{figure}

    If $S$ and $T$ intersect, a path exists between
    $\phi_0$ and $\psi_0$ in $\mathcal{H}$ in at most
    $\mathcal{O}(\frac{\log{n}}{\log{\log{n}}})$ hops, where $\psi_i$ is the
    set of fresh nodes added at step $i$ to $T$, completing our proof.
    We conclude by showing that if $S$ does not intersect with $T$ by
    step $f$, it almost surely will intersect with $T$ at the next step.
    It is easy to see that $|\phi_{f_1}|$ and $|\psi_{f_2}|$
    are both in $\Theta(n_\mathcal{H}^\eta)$.
    The maximum distance between any two nodes in $\mathcal{F}$ is
    $\Theta(\sqrt[\alpha]{n})$, and so the probability that an arbitrary
    long-range contact $v$ of a highway node $u$ in $\phi_f$ is a
    member of $\psi_f$ is at least proportional to $n_\mathcal{H}^\eta/(n
    z(u)) \geq n^{\eta - 1}/(k z(u))$.
    The probability that $v$ is not a member of $\psi_f$ is therefore at most
    $1 - n^{\eta - 1}/(k z(u)) \leq e^{-\frac{n^{\eta - 1}}{k z(u)}}$.
    The $\Theta(n_\mathcal{H}^\eta)$ members of $\phi_f$ each have $\Theta(k)$
    long-range contacts, and so the probability that none are in
    $\psi_f$ is at most $e^{-\frac{n^{2 \eta - 1}}{k z(u)}} \leq
    e^{-\frac{n^{2 \eta - 1}}{\log{n} \log{\log{n}}}}$ 
    for $k \in
    \Theta(\log{n})$.
    By picking $\eta > 0.75$, $n^{2 \eta - 1} \geq \sqrt{n}\log^3{n}$,
    the probability of failure is at most $e^{-\sqrt{n} \ln{n}} =
    n^{-\sqrt{n}}$, and of success is at least $1 -
    n^{-\sqrt{n}}$ which is certainly with high probability.
    \qed
\end{proof}

\section{Experiments}

One of the key assumptions made in previous papers is that the underlying
network is a lattice, and consequently for 2-dimensional data, the optimal
clustering exponent should be 2.
This assumption was even made in two recent papers that analyzed road networks
of the 50 U.S. states~\cite{goodrich2022modeling,gila2023highway} and DC, despite the
fact that these networks are certainly not laid out like perfect square grids.
Our work rejects the notion that the road networks are best modeled as a
lattice, and instead propose modeling them as fixed-growth graphs with some
dimensionality $\alpha$ that is not necessarily an integer.
In this section we present experiments to support this claim, showing how by
modeling the road networks as fixed-growth graphs, and by picking a clustering
coefficient as determined by the underlying dimensionality $\alpha$ of the graph
rather than assuming it is 2, we can achieve better routing performance for
\textit{every} U.S. state.
See \Cref{fig:greedy_routing_comp}.

It is important to note, however, that our definition of ``fixed-growth'' graphs
is not directly meaningful for finite graphs or graph families as the constants
hidden in the fixed-growth $\Theta$ can be picked to make any dimensionality
$\alpha$ work.
Nevertheless, we have found a heuristic to estimate
which dimensionality $\alpha$ is most appropriate for a given graph: 
minimizing the discrepancy in size between balls of the same radius. 
We also employ the same method as \cite{goodrich2022modeling} to 
empirically determine the best clustering coefficient%
% for a graph
.
We go into more detail in \Cref{sec:more-exp}.

\begin{figure}[t]
	% \vspace*{-\bigskipamount}
	\centering
	\includegraphics[width=\linewidth]{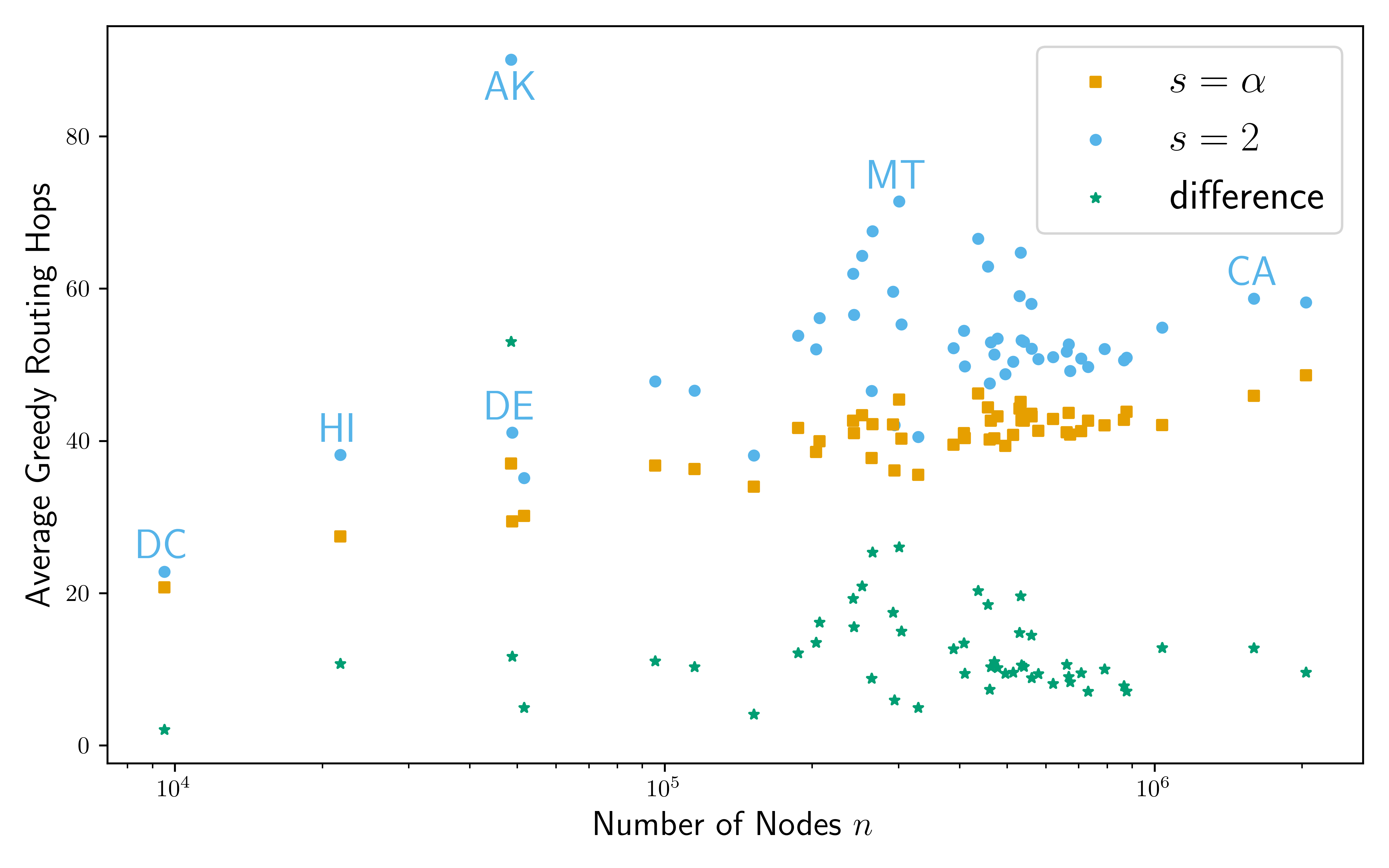}
	\vspace*{-\bigskipamount}
	\vspace*{-\medskipamount}
	\caption{
		Here we compare the average greedy routing performance of randomized
		highway graphs constructed from the road networks of all 50 U.S. states
		and DC using different clustering coefficients $s$.
		Specifically, we compare the performance for when $s$ is set to 2
		(blue), which is the value used by previous papers who assume that the
		road networks behave like a 2-dimensional lattice, to when $s$ is set to
		$\alpha$ (orange), the estimated dimensionality of the graph assuming
		that the road networks behave like fixed-growth graphs.
		Modeling them as fixed-growth graphs \textit{always} provides better
		greedy routing performance, as evident by their difference (green)
		always being positive.
		\label{fig:greedy_routing_comp}
	}
	\vspace*{-\medskipamount}
\end{figure}

\begin{figure}[t]
	\centering
	\includegraphics[width=\linewidth]{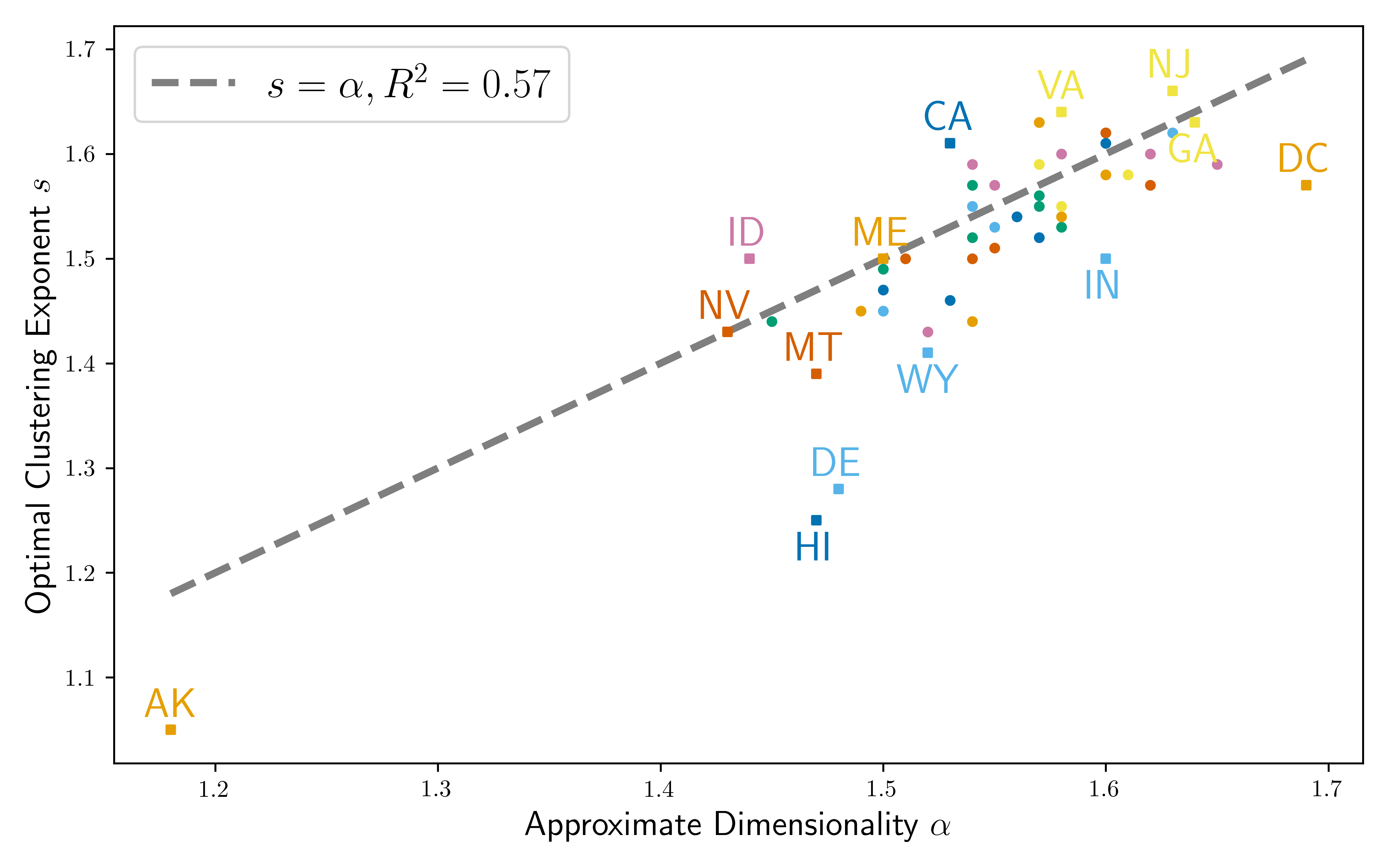}
	\vspace*{-\bigskipamount}
	\vspace*{-\medskipamount}
	\caption{
    % In this experiment, we have utilized road network data from all 50 U.S. states to compute estimates of the optimal clustering coefficient $s$ (used to draw long-range edges) and of the FG dimensionality $\alpha$ of each road network.
    % We compute each value of $s$ as in \cite{goodrich2022modeling}, by repeatedly performing randomized instances of our greedy routing algorithm on each graph with different choices of $s$ until we find an $s$ that minimizes the number of hops. 
    % To compute the dimensionality of each graph, we use a heuristic: finding an $\alpha$ that minimizes the discrepancy in size between balls of the same radius. 
    Here we compare our estimated dimensionality $\alpha$, as used in \Cref{fig:greedy_routing_comp},
	to the empirically
    determined optimal clustering exponent $s$ for each state.
	There is a clear correlation between our values of $s$ and $\alpha$, and
    importantly, none of the optimal clustering coefficients are close to two,
    the assumed dimensionality in previous work
    \cite{gila2023highway,goodrich2022modeling,kleinberg2000small}.
    This suggests that our fixed-growth model is able to take advantage of the
    intrinsic growth rate of these graphs in a way that abstracting them to a
    lattice does not.
    \label{fig:clustering_exponent_vs_dimensionality}}
	\vspace*{-\medskipamount}
\end{figure}

We used the same U.S. road network data as the previous papers to provide an
estimate for the approximate dimensionality $\alpha$ and plotted it against the
empirically determined optimal clustering exponent $s$ in
\Cref{fig:clustering_exponent_vs_dimensionality}.
The first thing to note is that the estimated dimensionality $\alpha$ for U.S.
states is some fraction between 1 and 2.
Most previous small world analyses only apply to graphs with integer
dimensionality.
We observe that the correlation, while not perfect, is still clearly there.
Road networks that have much lower dimensionality such as Alaska
have a much smaller optimal clustering exponent than those with higher
dimensionality like that of Washington, DC.
And for all networks, the approximate dimensionality is a much better choice for
the clustering exponent than 2
as is further confirmed by \Cref{fig:greedy_routing_comp}. 
% As a result, we believe that
% routing in our fixed-growth model 
% better explains Milgram's initial findings than previous models which
% model networks in a lattice. 

A previous paper empirically analyzing the U.S. road networks in
depth~\cite{goodrich2022modeling} conjectured that as the size of the state
increases, the optimal clustering exponent should increase tending towards 2,
based on a comparison of two U.S. states: Hawaii and California. 
In \Cref{sec:more-exp}, we present data from all 50 states showing
a much stronger correlation between the optimal clustering exponent and the
approximate dimensionality of the graph than to the state size.

\section{Future Work}

In this paper we have shown how to effectively model the highway networks of the
50 U.S. states as fixed-growth graphs with constant dimensionality $\alpha$.
We believe that this model can be extended to many other applications such as
for graphs that model disease spread, decentralized peer-to-peer communication
networks, and gossip protocols.
It would certainly be interesting to see how the model performs on these
applications.
We are also optimistic that this model can be used to provide bounds for various
highway constants $k$ other than $\Theta(\log{n})$.
Another interesting direction would be to consider the case where the highway
dimensionality $\alpha$ depends on either the ball radius $\ell$, the number
of nodes in the graph $n$, or is non-uniform across the graph.
% Finally, we think there may probability distributions from which we pick
% long-range contacts that may provide better greedy routing results than
% $d(u, t)^{-s}$.

\begin{credits}
	\subsubsection{\ackname} We thank Evrim Ozel for helpful discussions
	regarding the best ways to generalize lattices, importantly helping us
	settle on the fixed-growth model.
\end{credits}

\clearpage

\bibliographystyle{splncs04}
\bibliography{refs}

\clearpage
\appendix

\section{Related Prior Work} \label{sec:related-work}

Prior to Milgram's experiment, Pool and Kochen \cite{de1978contacts} as well as others \cite{kochensmallworld} found that if social connections were selected uniformly from the population, then the diameter of the graph is small with high probability. 
While interesting, these early results did not provide a compelling reason for the small-world phenomenon. 
In 1998, Watts and Strogatz suggested that individuals should have both ``local'' connections to adjacent nodes as well as ``long-range'' contacts like the ones studied prior \cite{watts1998collective}. 
This captured a sense of locality to social networks, and was able to model several natural and manmade phenomena. 
Kleinberg \cite{kleinberg2000small} generalized and improved their model, altering the criteria for choosing long-range contacts such that the graph enabled greedy routing between any two nodes to succeed with high probability in $\mathcal{O}(\log^2 n)$ time. 
Martel and Nguyen \cite{martel} later found that greedy routing in Kleinberg's model runs in $\Theta(\log^2 n)$ time yet the diameter of the graph is $\Theta(\log n)$, meaning that Kleinberg's algorithm fails to find the true shortest paths available in the graph by a $\log n$ factor. 
Other works have implemented Kleinberg's analyses in other topologies and types of graphs \cite{fraigniaudgreedy,kleinbergdynamics,barriereefficient,duchon2006could}. 

Another small-world model is the \emph{preferential attachment} model of Barabási and Albert \cite{barabasipreferential}. 
Here, nodes are added to the graph one at a time.
New nodes connect to randomly selected other nodes, with probability proportional to the other nodes' degree (a ``rich-get-richer'' process). Dommers, Hofstad, and Hooghiemstra \cite{dommers2010diameters} showed that the diameter of this model is low, yet unlike Kleinberg's model, no greedy algorithm exists to find such short paths between nodes. 
Since then, there have been efforts to combine the two models in order to get the best of both worlds.
Bringmann, Keusch, Lengler, Maus, and Molla propose such a model with an average $\mathcal{O}(\log\log n)$ greedy routing time, yet their model requires nodes of the network to be randomly distributed in some geometric space. 
Goodrich and Ozel take another approach, inspired by the U.S. highway system (which is ultimately what carried the messages shared in Milgram's experiment), called the \emph{neighborhood preferential attachment} model. 
Nodes are again added to the graph one at a time, but the probability of connecting to some other node $v$ is based both on the degree of $v$ as well as the distance between the new node and $v$. 
While their intuition was corroborated by their strong empiricial results as well as by Abraham, Fiat, Goldberg, and Werneck's work on the \emph{highway dimension} of graphs \cite{abrahamhighway}, Goodrich and Ozel were not able to prove any theoretical bounds for their model \cite{goodrich2022modeling}. 
In 2023, Gila, Ozel, and Goodrich \cite{gila2023highway} adapted the model of Goodrich and Ozel back to a lattice, their \emph{randomized highway} model, and were able to show that it expects to perform greedy routing in $\mathcal{O}(\log n)$ time. 

\clearpage

\section{Omitted Results and Proofs} \label{sec:misc-proofs}

\vspace*{-\medskipamount}

Here we provide additional lemmas and proofs omitted from
the main text.
\vspace*{-\medskipamount}

\subsubsection{The Number of Independent Balls.}
The following result tightly bounds the number of independent balls present
in the graph.
This result is important for lower bounding the distance to the highway, as is
done in the proof of \Cref{lem:distance-to-highway}.
\vspace*{-\medskipamount}
\vspace*{-\medskipamount}
\begin{lemma} \label{lem:indep-balls}
	A fixed-growth graph $\mathcal{G}$ with dimensionality $\alpha$ contains
	$\Theta\pars*{\frac{n}{\ell^\alpha}}$ independent balls of radius $\ell$.
\end{lemma}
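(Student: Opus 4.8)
The plan is to prove matching upper and lower bounds by a simple packing argument, where ``independent'' balls means \emph{node-disjoint} balls (so that, since each node independently becomes a highway node, their highway-node counts are mutually independent).

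For the upper bound, I would invoke the fixed-growth property to fix a constant $c_1 > 0$ with $|\mathcal{B}_\ell(u)| \ge c_1 \ell^\alpha$ for every node $u$ and every admissible radius $\ell$. Any family of pairwise node-disjoint radius-$\ell$ balls then covers at least $c_1 \ell^\alpha$ times its size many distinct nodes, and since the graph has only $n$ nodes, the family has size at most $n/(c_1 \ell^\alpha) = \mathcal{O}(n/\ell^\alpha)$.

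For the lower bound I would build such a family greedily. Maintain a set $U$ of ``available'' nodes, initially all $n$ of them; while $U \neq \emptyset$, pick any $u \in U$, add $\mathcal{B}_\ell(u)$ to the collection, and delete $\mathcal{B}_{2\ell}(u)$ from $U$. By fixed-growth there is a constant $c_3$ with $|\mathcal{B}_{2\ell}(u)| \le c_3 (2\ell)^\alpha = \mathcal{O}(\ell^\alpha)$, so each iteration removes only $\mathcal{O}(\ell^\alpha)$ nodes and the process runs for $\Omega(n/\ell^\alpha)$ iterations. The resulting balls are pairwise disjoint: if $u$ is chosen before $v$, then $v$ survived the deletion of $\mathcal{B}_{2\ell}(u)$, hence $d(u,v) > 2\ell$, and the triangle inequality forces $\mathcal{B}_\ell(u) \cap \mathcal{B}_\ell(v) = \emptyset$. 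Finally, node-disjointness plus the independent $1/k$ highway-coin at each node gives that the highway-node counts of these balls are mutually independent, as needed elsewhere (e.g.\ in \Cref{lem:distance-to-highway}).

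There is no deep obstacle here; the only point requiring care is the boundary regime $\ell = \Theta(\sqrt[\alpha]{n})$, where $2\ell$ exceeds the radius at which a ball already captures the whole graph. In that case $\mathcal{B}_{2\ell}(u)$ is all of $\mathcal{G}$, the greedy process terminates after one step, and the family has size $\Theta(1) = \Theta(n/\ell^\alpha)$, so the statement still holds and the argument degrades gracefully.
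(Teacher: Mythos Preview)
Your proposal is correct and follows essentially the same greedy packing argument as the paper: repeatedly pick an available node $u$, delete $\mathcal{B}_{2\ell}(u)$, and observe that each step removes $\Theta(\ell^\alpha)$ nodes so the process yields $\Theta(n/\ell^\alpha)$ centers whose radius-$\ell$ balls are pairwise disjoint. Your version is a bit more explicit than the paper's---you spell out the upper bound via a separate counting argument, justify disjointness via the triangle inequality, and treat the boundary case $\ell=\Theta(\sqrt[\alpha]{n})$---but these are elaborations of the same idea rather than a different approach.
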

\begin{proof}
	Consider the following procedure:
	\begin{enumerate}
		\item Consider all nodes in $\mathcal{G}$ as ``available''.
		
		\item Pick an arbitrary available node $u$ and mark all nodes in
		$\mathcal{B}_{2\ell}(u)$ as ``unavailable''.

		\item Add $u$ to the set of ball centers.
		
		\item Repeat steps 2 and 3 until all nodes are marked as unavailable.
	\end{enumerate}

	In step 2 we mark $\Theta(\ell^\alpha)$ nodes as unavailable,
	allowing our procedure to repeat $\Theta\pars*{\frac{n}{\ell^\alpha}}$
	times, picking a new ball center each time.
	Balls of radius $\ell$ centered around each center must be
	independent since they are at least $2\ell$ apart.
	\qed
\end{proof}

\vspace*{-\medskipamount}
\vspace*{-\medskipamount}

\subsubsection{The Probability of Distance Improving.} \label{sec:prob-halving}
In this section we lower bound the probability that a random long-range
contact $v$ of node $u$ improves its distance to another node $t$ by a constant
factor.
% This result is used when proving tight bounds for greedy routing.
%
\begin{lemma} \label{lem:prob-halving}
	The probability that any of the long-range contacts of a highway node $u$
	improves its distance to the destination $t$ by a factor of $c$ is at least
	proportional to $[(c + 1)^\alpha z(u)]^{-1}$ for $c \geq c_0$ for some
	constant $c_0$.
\end{lemma}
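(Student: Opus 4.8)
The plan is to single out a region around the destination $t$ that is simultaneously large --- so that it contains many highway nodes --- and close to $u$ --- so that each of $u$'s $\Theta(k)$ long-range contacts has a nontrivial chance of landing in it --- and then argue that with $\Theta(k)$ independent tries at least one such contact succeeds. Write $d = d(u,t)$ and take the target region to be the ball $\mathcal{B}_{d/c}(t)$. Every highway node $v$ in this ball has $d(v,t) \le d/c$, so a long-range contact from $u$ to $v$ improves the distance to $t$ by a factor of at least $c$; and because $v$ lies on the highway, such a contact is a legal choice.

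First I would condition on the set of highway nodes, which fixes $z(u)$. As long as $d$ and $c$ are such that $d/c$ lies in the range $\Omega(\sqrt[\alpha]{k\log n}) \le d/c \le \sqrt[\alpha]{n}$ --- which is the role of the constant $c_0$, ensuring the ball around $t$ is in the range covered by \Cref{lem:balls} in the regime $d = \Theta(\sqrt[\alpha]{n})$ where the lemma is applied --- \Crefpart{lem:balls}{lem:balls:always} gives that $\mathcal{B}_{d/c}(t)$ contains $\Theta\pars*{(d/c)^\alpha/k}$ highway nodes with high probability, and I would condition on this event too. For any such $v$, the triangle inequality gives $d(u,v) \le d + d/c = d(1 + 1/c)$, hence $d(u,v)^{-\alpha} \ge d^{-\alpha}(1+1/c)^{-\alpha}$. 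Multiplying the highway-node count by this per-node probability and dividing by $z(u)$, the probability $p$ that one fixed long-range contact of $u$ lands on a highway node in $\mathcal{B}_{d/c}(t)$ is at least proportional to
\[ \frac{1}{z(u)}\cdot\frac{(d/c)^\alpha}{k}\cdot d^{-\alpha}(1+1/c)^{-\alpha} = \frac{1}{k\,z(u)\,(c+1)^\alpha}, \]
where the two powers of $d$ cancel and $c^{-\alpha}(1+1/c)^{-\alpha} = (c+1)^{-\alpha}$ --- this cancellation is what produces the clean $(c+1)^\alpha$ factor. Since $u$ has $\Theta(k)$ independently chosen long-range contacts, the chance that none of them lands in the ball is at most $(1-p)^{\Theta(k)} \le \exp\pars*{-\Theta\pars*{1/(z(u)(c+1)^\alpha)}}$, so using $1 - e^{-x} = \Omega(\min\{1,x\})$ the probability that at least one does is $\Omega\pars*{[(c+1)^\alpha z(u)]^{-1}}$ (the bound being vacuous when this quantity exceeds $1$), which is the claim.

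The delicate point is the bookkeeping of randomness rather than any single inequality: the quantities $z(u)$ and $|\mathcal H \cap \mathcal B_{d/c}(t)|$ both depend on the random highway-node set, so the honest statement is ``conditioned on the highway set and on the high-probability event of \Cref{lem:balls}, the probability over $u$'s long-range contacts is $\Omega([(c+1)^\alpha z(u)]^{-1})$,'' with the $z(u)$-dependence deliberately kept so that \Cref{lem:z} can be layered on top of it in the routing proofs. The only other thing to verify is the elementary step $1-(1-p)^m = \Omega(\min\{1,pm\})$ used to pass from one long-range contact to $\Theta(k)$ of them, which is routine.
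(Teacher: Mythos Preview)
Your proof is correct and mirrors the paper's exactly: target the ball $\mathcal{B}_{d/c}(t)$, count its highway nodes via \Cref{lem:balls}, bound the distance from $u$ by $d(1+1/c)$ via the triangle inequality, compute the single-contact probability $\Theta\pars*{[k(c+1)^\alpha z(u)]^{-1}}$, and amplify over the $\Theta(k)$ contacts. One small note: in the paper the constant $c_0$ enters not to control the ball radius but to make the exponent $[(c+1)^\alpha z(u)]^{-1}$ small enough for $e^{-x}\le 1-x/2$ to apply in the last step; your use of $1-e^{-x}=\Omega(\min\{1,x\})$ sidesteps that, so your attribution of $c_0$'s role is slightly off but harmless.
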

\begin{corollary} \label{cor:prob-halving-upper}
	The probability that any of the long-range contacts of a highway node $u$
	improves its distance to the destination $t$ by a factor of $c$ is at most
	proportional to $[(c - 1)^\alpha z(u)]^{-1}$ for $c \geq c_0$ for some
	constant $c_0$.
\end{corollary}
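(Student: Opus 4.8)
The plan is to mirror the proof of \Cref{lem:prob-halving}, turning each lower bound into an upper bound and replacing the factor $c+1$ by $c-1$. A long-range contact $v$ of $u$ improves its distance to $t$ by a factor of at least $c$ precisely when $v$ falls in the ball $B := \mathcal{B}_{d(u,t)/c}(t)$, so the event of interest is that at least one of the $\Theta(k)$ long-range contacts of $u$ lies in $B$. Since we now want an \emph{upper} bound, I would apply a union bound over those $\Theta(k)$ contacts; it then suffices to show that a single long-range contact of $u$ lands in $B$ with probability $\mathcal{O}\bigl(1/(k\,(c-1)^\alpha z(u))\bigr)$.

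The probability that one contact of $u$ lands in $B$ equals $\frac{1}{z(u)}\sum_{h \in \mathcal{H}\cap B} d(u,h)^{-\alpha}$, and I would bound this sum with two ingredients. First, the triangle inequality gives $d(u,h) \ge d(u,t) - d(t,h) \ge d(u,t)\,(c-1)/c$ for every $h \in B$, hence $d(u,h)^{-\alpha} \le \bigl(c/(c-1)\bigr)^\alpha d(u,t)^{-\alpha}$. Second, \crefpart{lem:balls}{lem:balls:always} applied to $B$, a ball of radius $d(u,t)/c$, certifies that $|\mathcal{H}\cap B| = \mathcal{O}\bigl((d(u,t)/c)^\alpha/k\bigr)$ w.h.p. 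Multiplying the count by the pointwise bound and dividing by $z(u)$ gives a single-contact probability of $\mathcal{O}\bigl((d(u,t)/c)^\alpha/k \cdot (c/(c-1))^\alpha d(u,t)^{-\alpha}/z(u)\bigr) = \mathcal{O}\bigl(1/(k\,(c-1)^\alpha z(u))\bigr)$, and the union bound over $\Theta(k)$ contacts then yields the claimed $\mathcal{O}\bigl(1/((c-1)^\alpha z(u))\bigr)$.

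The main obstacle is the regime where $c$ is large enough that $B$ has radius below the $\Omega(\sqrt[\alpha]{k\log n})$ threshold of \crefpart{lem:balls}{lem:balls:always}, so the count $|\mathcal{H}\cap B| = \mathcal{O}((d(u,t)/c)^\alpha/k)$ is no longer guaranteed. There I would fall back to enclosing $B$ in a ball of radius $\Theta(\sqrt[\alpha]{k\log n})$, which holds only $\mathcal{O}(\log n)$ highway nodes w.h.p., and check that the resulting weaker estimate still matches $\mathcal{O}(1/((c-1)^\alpha z(u)))$ after enlarging the threshold constant $c_0$ — using also that $z(u) = \Omega(\log n/k) = \Omega(1)$ from \crefpart{lem:z}{lem:z:lower-always} and that every highway node in $B$ sits at distance $\Omega(d(u,t)(c-1)/c)$ from $u$; getting the constants to close in this small-ball case is the fiddly part. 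The only other subtlety is that $z(u)$, $\mathcal{H}$, and $\mathcal{H}\cap B$ are correlated, which I would handle by first conditioning on the highway-node configuration — invoking the high-probability guarantees of \Cref{lem:balls}, which hold simultaneously for all balls — and only then using the independent randomness in $u$'s choice of long-range contacts.
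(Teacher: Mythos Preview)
Your proposal is correct and follows essentially the same approach as the paper. The paper's entire treatment of the corollary is a single sentence appended to the proof of \Cref{lem:prob-halving}: ``The corollary follows from a very similar analysis by assuming that all the highway nodes are as close to $u$ as possible, i.e., at distance $d - d/c$ from $u$.'' That is precisely your triangle-inequality step $d(u,h)\ge d(u,t)(c-1)/c$, combined with the same highway-node count $\Theta\bigl((d/c)^\alpha/k\bigr)$ from \crefpart{lem:balls}{lem:balls:always}; your union bound over the $\Theta(k)$ contacts is the natural dual of the $1-x\le e^{-x}$ manipulation used for the lower bound.

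Your extra care about the small-ball regime and about conditioning on the highway configuration goes beyond what the paper does. The paper never confronts the case $d/c<\Theta(\sqrt[\alpha]{k\log n})$: it simply inherits the standing hypothesis $d>c\,\Theta(\sqrt[\alpha]{k\log n})$ from the lemma's proof, so \crefpart{lem:balls}{lem:balls:always} always applies directly. In the one place the corollary is invoked (the lower bound in \Cref{thm:greedy-routing-hp}), the authors integrate $c$ out to infinity but explicitly remark that $c$ is really bounded by $d_0/d_f$, which keeps them in the large-ball regime. So your fallback argument is not needed for the paper's purposes, though it is a reasonable thing to worry about.
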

\begin{proof}
	Assume we are at highway node $u$ which is at distance $d > c \text{ }
	\Theta(\sqrt[\alpha]{k \log{n}})$ from the destination $t$ for some
	arbitrary constant $c > 1$.
	From \Crefpart{lem:balls}{lem:balls:always}, we know that there are
	$\Theta(\frac{d^\alpha}{c^\alpha k})$ highway nodes in
	$\mathcal{B}_{d/c}(u)$ w.h.p.
	% , referring to the exact number of highway
	% nodes as $\frac{c_2 d^\alpha}{c^\alpha k}$ for some $c_2$ that is both
	% lower and upper bounded by constants for reasons that will become clear
	% later.
	The maximum distance between $u$ and any node in $\mathcal{B}_{d/c}(u)$ is
	$d + d/c$.
	Using this, and letting $v$ be an arbitrary long-range connection of $u$,
	we obtain:
	\begin{equation*}
		\Pr(v \in \mathcal{B}_{d/c}(u)) \geq \frac{\Theta\pars*{\frac{ d^\alpha}{c^\alpha k}}}{d^\alpha (1 + 1/c)^\alpha z(u)} = \Theta([(c + 1)^\alpha k z(u) ]^{-1})% = \Theta([k z(u)]^{-1})
	\end{equation*}
	%
	% Since we assume that $u$ has the improved normalization constant bounded by
	% \Crefpart{lem:z}{lem:z:constant}, the probability is at least $c_2 [(c +
	% 1)^\alpha (\log{n} + k) ]^{-1}$.
	The probability of $v$ not being in $\mathcal{B}_{d/c}(u)$ is
	$1 - \Pr(v \in \mathcal{B}_{d/c}(u)) \leq e^{-\Pr(v \in
	\mathcal{B}_{d/c}(u))}$, using the fact that $1 + x \leq e^x$.
	Recalling that each highway node has $\Theta(k)$ long-range contacts,
	% referring to the exact number of long-range contacts of $u$ as $c_3 k$
	% for some $c_3$ that is both lower and upper bounded by constants,
	the probability that none of them are in $\mathcal{B}_{d/c}(u)$ is at most
	$e^{-k \Pr(v \in \mathcal{B}_{d/c}(u))} = e^{-\Theta([(c + 1)^\alpha
	z(u)]^{-1})}$.
	Since $z(u)$ is at least a constant, $c$ can be picked large enough such
	that the exponent is small, such that the inequality $e^{-x} \leq 1 - x/2$
	holds, and the probability that none of $u$'s long-range contacts are
	in the ball is at most $1 - \Theta([(c + 1)^\alpha
	z(u)]^{-1})$.
	Consequently, the probability that any of $u$'s long-range contacts
	are in the ball is at least $\Theta([(c + 1)^\alpha z(u)]^{-1})$.
	The corollary follows from a very similar analysis by assuming that all the
	highway nodes are as close to $u$ as possible, i.e., at distance $d - d/c$
	from $u$.
	\qed
\end{proof}
\vspace*{-\medskipamount}
\vspace*{-\medskipamount}
\subsubsection{The Existence of Closer long-range Contacts.}
\label{sec:closer-contacts}
While greedily routing on a fixed-growth graph it is important to stay on the
highway for as long as possible, since there may be significant cost associated
with returning to the highway, e.g., $\mathcal{O}(k)$ hops.
Note that there exist variants of our fixed-growth graphs where this time may be
significantly reduced, or where there is always a closer highway node by
construction,\footnote{As long as the destination is a sufficient distance
away.} but we include this result for the most general and loosest case.
\vspace*{-\medskipamount}
\vspace*{-\medskipamount}
\begin{lemma} \label{lem:closer-contacts}
	For $k \in \Omega(\log{n})$, the greedy routing process stays on
	the highway $\mathcal{H}$ long enough to reach within distance
	$\mathcal{O}(\sqrt[\alpha]{k \log{n}})$ of the destination $t$ w.h.p.
\end{lemma}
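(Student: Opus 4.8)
The plan is to follow the ``navigate the highway'' phase of the routing algorithm and to show that, as long as the current highway node $u$ lies at distance $d \ge c\sqrt[\alpha]{k\log n}$ from $t$ for a suitably large constant $c$, at least one long-range contact of $u$ is strictly closer to $t$, so the process can continue along $\mathcal{H}$; the claim is that the first time no such contact exists, the process is already within $O(\sqrt[\alpha]{k\log n})$ of $t$ and finishes with local hops. The first step is a geometric construction: for each dyadic scale $r$ between $\Theta(\sqrt[\alpha]{k\log n})$ and $d$, walk $\Theta(r)$ steps along a shortest $u$-to-$t$ path to a vertex $p_r$ (so $d(p_r,t)=d-\Theta(r)$) and consider $\mathcal{B}_{\Theta(r)}(p_r)$; by the triangle inequality this ball is contained in $\mathcal{B}_{O(r)}(u)\cap\mathcal{B}_{d-\Omega(r)}(t)$, and by choosing the constants so that these balls are disjoint, the first part of \Cref{lem:balls} shows that each contains $\Theta(r^\alpha/k)$ highway nodes with high probability, all within distance $O(r)$ of $u$.

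Summing these contributions, the ``mass'' $\sum_{h\in\mathcal{H},\,d(h,t)<d} d(u,h)^{-\alpha}$ is $\Omega\bigl(k^{-1}\log(d/\sqrt[\alpha]{k\log n})\bigr)$, since each of the $\Theta\bigl(\log(d/\sqrt[\alpha]{k\log n})\bigr)$ scales contributes $\Theta(r^\alpha/k)\cdot\Theta(r^{-\alpha})=\Theta(1/k)$. Dividing by the normalization bound $z(u)=O(\log\log n)$ from the first part of \Cref{lem:z} (here $k\in\Omega(\log n)$ absorbs the $(\log n)/k$ term) and recalling that $u$ has $\Theta(k)$ long-range contacts, the expected number of $u$'s long-range contacts strictly inside $\mathcal{B}_d(t)$ is $\Omega\bigl((\log\log n)^{-1}\log(d/\sqrt[\alpha]{k\log n})\bigr)$. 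A Chernoff bound then shows that such a contact exists with high probability in $n$ once $d$ exceeds $\sqrt[\alpha]{k\log n}$ by more than a polylogarithmic factor, and revealing the long-range contacts of the visited highway nodes one hop at a time (so that each is a fresh, independent sample, and conditioning on the high-probability conclusions of \Cref{lem:balls} and \Cref{lem:z}, which depend only on which nodes are highway nodes) lets a union bound over the $O(\polylog(n))$ hops handle the bulk of the journey. For the residual window, where $d$ is only within a polylogarithmic factor of $\sqrt[\alpha]{k\log n}$, I would fall back on the ``fresh node'' accounting already flagged for the in-expectation proof: the target balls still hold $\Omega(\log n)$ highway nodes apiece, so across the $O(\log\log n)$ residual scales there are $\omega(\log n)$ fresh closer targets while the process can visit only $\polylog(n)$ highway nodes there, so a union bound over that bounded candidate set keeps a fresh distance-improving long-range contact available until $d$ falls to $O(\sqrt[\alpha]{k\log n})$.

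The main obstacle is exactly this last regime: when the residual distance is comparable to $\sqrt[\alpha]{k\log n}$ there are too few highway nodes ``between'' $u$ and $t$ for the one-shot Chernoff estimate to be a high-probability statement by itself, and since the greedy path is random and correlated with the long-range-contact realizations, a naive union bound over highway nodes is unavailable. Making the fresh-node argument rigorous --- cleanly separating the randomness already exposed (the path so far, together with the high-probability events of \Cref{lem:balls} and \Cref{lem:z}) from the still-fresh long-range contacts of the node currently being visited, and arguing that the process cannot exhaust its fresh closer targets before crossing the $O(\sqrt[\alpha]{k\log n})$ threshold --- is where the real work lies; the ball counting and concentration steps are routine repetitions of those in \Cref{lem:balls}, \Cref{lem:z}, and \Cref{lem:distance-to-highway}.
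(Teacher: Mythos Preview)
Your core construction --- placing centers along a shortest $u$--$t$ path and using \Cref{lem:balls} to count closer highway nodes --- is exactly what the paper does. The paper uses linearly spaced centers $c_b$ at step $b\cdot w$ along the path with growing radii $bw$ and invokes \Cref{lem:shells} for the fresh count at each $b$; you use dyadically spaced balls instead. Both sum to the same $\Theta\bigl(\tfrac{1}{kz(u)}\log(d/w)\bigr)$ per-contact probability, and hence the same per-step failure bound $(d/w)^{-\Theta(1/z(u))}$. The paper then observes that this bound coincides with the one obtained in \cite{gila2023highway} and defers the remaining high-probability argument entirely to that reference.

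Your attempt to fill in that deferred step has quantitative gaps. In your first regime you claim a single step succeeds with high probability in $n$ once $d$ exceeds $w=\sqrt[\alpha]{k\log n}$ by a polylogarithmic factor, but with $z(u)=O(\log\log n)$ the bound $(d/w)^{-\Theta(1/\log\log n)}$ is only $e^{-\Theta(1)}$ when $d/w=(\log n)^{O(1)}$, and even at $d=\Theta(\sqrt[\alpha]{n})$ it is merely $n^{-\Theta(1/\log\log n)}$, not $n^{-c}$ for any constant $c$. Your second-regime sketch is also not a proof: having $\omega(\log n)$ fresh closer highway \emph{targets} does not by itself guarantee that each visited node has a long-range \emph{contact} among them, and the union bound you invoke would be over steps whose individual failure probability is only a constant (since $d/w$ is bounded there), so it cannot yield a $o(1)$ total. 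You are right that this last window is the crux, and you have correctly isolated where the real work lies; but the resolution requires the more delicate accounting carried out in \cite{gila2023highway}, which neither the paper nor your sketch reproduces here.
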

\begin{corollary} \label{cor:closer-contacts-exp}
	For $k \in \Omega(\log{n})$, the greedy routing process can expect to stay
	on $\mathcal{H}$ long enough to reach within distance
	$\mathcal{O}(\sqrt[\alpha]{k})$ of an arbitrary destination $t$.
\end{corollary}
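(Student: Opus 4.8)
The plan is to prove the expectation bound of \Cref{cor:closer-contacts-exp} and the high-probability bound of \Cref{lem:closer-contacts} together, since both hinge on one quantity: the factor $x$ by which a single greedy hop along the highway improves the current distance to $t$. The reduction is this: if some long-range contact of the current highway node $u$ is a highway node strictly closer to $t$ than $u$ (every long-range contact is already a highway node, so ``closer to $t$'' is the only constraint), greedy can follow it and remain on $\mathcal{H}$; so I need to control (i) the probability that such a contact exists at all, which keeps us on $\mathcal{H}$, and (ii) the distribution of $x$, which controls how fast the distance shrinks. Both come from combining \Cref{lem:balls} with \Cref{lem:z}: if $u$ is at distance $d$ from $t$ with $d$ above the target radius, \Crefpart{lem:balls}{lem:balls:always} places $\Theta((d/c)^\alpha/k)$ highway nodes in $\mathcal{B}_{d/c}(t)$ for any constant $c\geq 1$, each within distance $(1+1/c)d$ of $u$, so a single long-range contact of $u$ achieves improvement factor $\geq c$ with probability $\Omega(1/((c+1)^\alpha k\,z(u)))$, and since $u$ has $\Theta(k)$ contacts, $\Pr(x\geq c)=\Omega(1/((c+1)^\alpha z(u)))$ --- matching \Cref{lem:prob-halving}.

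The central estimate is that $\mathbb{E}[\log x]=\Omega(1)$ at each highway hop. To see it, recall from \Crefpart{lem:balls}{lem:balls:constant} that with constant probability no highway node lies within distance $\Theta(\sqrt[\alpha]{k})$ of $u$, and in that event \Crefpart{lem:z}{lem:z:constant} gives $z(u)=\mathcal{O}(\log n/k)=\mathcal{O}(1)$ (as $k\in\Omega(\log n)$), so $\Pr(x\geq c\mid\text{favorable})=\Omega(c^{-\alpha})$ for constant $c$ and hence $\mathbb{E}[\log x\mid\text{favorable}]\geq\int_1^{\Theta(1)}\Pr(x\geq c\mid\text{favorable})\,dc/c=\Omega(1)$; multiplying by the constant probability of the favorable event gives $\mathbb{E}[\log x]=\Omega(1)$ unconditionally. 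For \Cref{cor:closer-contacts-exp} this finishes it: greedy only ever follows strictly-closer highway contacts, so the distance is non-increasing and the sequence of improvement factors $x_i>1$ satisfies $\sum_i\log x_i=\log(d_0/d_f)$; to reach $d_f=\Theta(\sqrt[\alpha]{k})$ from $d_0=\Theta(\sqrt[\alpha]{n})$ we need $\sum_i\log x_i=\Theta(\log(n/k))=\mathcal{O}(\log n)$, and since each hop contributes $\Omega(1)$ to this sum in expectation, the expected number of highway hops is $\mathcal{O}(\log n)$, all of them on $\mathcal{H}$.

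For \Cref{lem:closer-contacts} I must also certify, with the required probability, that greedy never gets stuck (never has to leave $\mathcal{H}$) before reaching distance $R=\Theta(\sqrt[\alpha]{k\log n})$ --- the smallest radius at which \Crefpart{lem:balls}{lem:balls:always} still certifies a positive highway count around $t$. For this I would show that, for every highway node $u$ on the path, a constant fraction of $u$'s total long-range-contact mass $z(u)$ lies on highway nodes nearer $t$. With high probability the $\Theta(w^\alpha/k)=\Theta(\log n)$ nodes of $\mathcal{B}_w(u)$ on the $t$-side ($w=\Theta(\sqrt[\alpha]{k\log n})$) contain $\Theta(\log n)$ highway nodes (Chernoff over $\Theta(\log n)$ independent Bernoullis), and \Cref{lem:shells} centered at $t$ caps how much of $z(u)$ can come from nodes at distance $\geq d$ from $t$; together these make the toward-$t$ portion of $z(u)$ a constant fraction of $z(u)$, so each of $u$'s $\Theta(k)$ contacts is strictly closer with probability $\Omega(1)$ and none is only with probability $e^{-\Omega(k)}$, which survives a union bound over the $\mathcal{O}(\log n)$ path nodes. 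Finally, to argue that the distance actually drops below $R$ (and within $\mathcal{O}(\log n)$ hops), I apply a Bernstein/Azuma concentration inequality to the martingale $\sum_i\log x_i$ --- the $\log x_i$ have mean $\Omega(1)$ and sub-exponential tails by the $\Pr(x\geq c)=\mathcal{O}(c^{-\alpha})$ bound of \Cref{cor:prob-halving-upper} --- revealing each highway node's $\Theta(k)$ contacts as one fresh independent batch when the node is first visited, so that $\sum_{i\leq f}\log x_i$ exceeds $\log(d_0/R)=\Theta(\log n)$ for $f=\mathcal{O}(\log n)$ with high probability.

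I expect the real obstacle to be the ``a constant fraction of the long-range-contact mass points toward $t$'' claim, since this is what lifts the per-hop success probability from the weak $\Omega(1/z(u))$ of the opening paragraph to the $1-n^{-\Omega(1)}$ needed so greedy is never forced off $\mathcal{H}$. The worst case to rule out is a configuration where the highway nodes near $u$ are clustered on the side of $u$ away from $t$, inflating the ``wasted'' part of $z(u)$; I would control this uniformly over $u$ (for the fixed $t$) using \Cref{lem:shells} about $t$ together with the elementary fixed-growth fact that at each radius $\delta<d$ from $u$ a constant fraction of the $\Theta(\delta^{\alpha-1})$ nodes lie closer to $t$, and the independence of highway membership. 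The secondary technical point, the adaptivity of the greedy path, I handle as noted by revealing contacts node-by-node --- distinct highway nodes' contacts are independent given the highway set, which is pinned down once and for all via \Cref{lem:balls} and \Cref{lem:shells}, so no conditioning bias remains.
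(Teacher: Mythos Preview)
Your argument has a genuine gap, and it stems from misidentifying what \Cref{cor:closer-contacts-exp} and \Cref{lem:closer-contacts} actually assert. They are not hop-count bounds; they assert that at every highway node $u$ encountered (down to the stated radius) \emph{some} long-range contact of $u$ is strictly closer to $t$, so greedy is never forced off $\mathcal{H}$. Your second paragraph proves something else: that $\mathbb{E}[\log x]=\Omega(1)$, hence the expected number of halvings is $\mathcal{O}(\log n)$. But this computation presupposes $x>1$ at every step, which is exactly the claim at issue. Worse, the only lower bound your analysis yields on $\Pr(x>1)$ is the constant coming from \Cref{lem:prob-halving} with $c\to 1^+$, namely $\Omega(1/z(u))$ per hop (or $\Omega(1)$ in your ``favorable'' event). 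A merely constant success probability per step means you fall off the highway after $\mathcal{O}(1)$ steps with constant probability; it certainly does not let you survive the $\Theta(\log n)$ steps needed to reach distance $\mathcal{O}(\sqrt[\alpha]{k})$.

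The paper closes this gap with an idea your proposal does not contain. Rather than trying to show that a constant fraction of $z(u)$ sits on the $t$-side of $u$ (a claim you correctly flag as the obstacle, and which is in fact false for general fixed-growth graphs---the fixed-growth condition constrains ball \emph{sizes}, not how each sphere around $u$ splits between ``closer to $t$'' and ``farther from $t$''), the paper lays ball centers $c_b$ along a shortest $u$--$t$ path, with $c_b$ at distance roughly $bw$ from $u$. By the triangle inequality every node within radius $bw$ of $c_b$ is closer to $t$ than $u$ is, with no geometric side-assumption needed. Summing the contributions of the fresh highway nodes in these nested balls gives
\[
\Pr(v\text{ is closer to }t)\;\geq\;\Theta\!\left(\frac{1}{k\,z(u)}\log\frac{d(u,t)}{w}\right),
\]
so with $\Theta(k)$ contacts the probability that \emph{none} is closer is at most $d(u,t)^{-\Theta(1/z(u))}$, which is polynomially small in $d(u,t)$ and survives the union bound over the path. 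Your ``constant fraction of the $\Theta(\delta^{\alpha-1})$ nodes at radius $\delta$ lie closer to $t$'' is a lattice fact, not a fixed-growth fact, and your shells-around-$t$ idea cannot cap the wrong-side contribution to $z(u)$ either, since a highway node at distance $1$ from $u$ can perfectly well have $d(\cdot,t)=d(u,t)+1$. The path-node construction is precisely the device that replaces these unavailable geometric symmetries.
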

\begin{proof}
	When \cite{gila2023highway} prove a similar result for their lattice-based
	randomized highway model, they make extensive use of the underlying lattice
	geometry, specifically the fact where they know exactly how many balls of
	some radius and some distance from $u$ are closer to $t$.
	While we do not have this luxury, we can still achieve similar results in
	this more general setting by using a bit more care.
	
	Specifically, let us consider an arbitrary, optimal (i.e., $d(u, t)$-length) path from $u$ to
	$t$.
	Let $c_b$ be the $1 + (b \times c)$-th node on this path.\footnote{In order to
	achieve independence for using the improved normalization constant from
	\Crefpart{lem:z}{lem:z:constant}, we need to consider nodes at least a distance
	$\Omega(\sqrt[\alpha]{k})$ apart. This can be achieved by letting $c_b$
	refer to the $a \sqrt[\alpha]{k} + b \times w$-th node on the path,
	for some sufficiently large constant $a$, without affecting the results.}
	All nodes within radius $b w$ of $c_b$ are closer to $t$ than to $u$.
	And any node within radius $b w$ of $c_b$ which is not within radius $(b -
	1)w$ of $c_{b - 1}$ is a previously unseen, ``fresh'' node.
	Setting our width $w$ to $\Theta(\sqrt[\alpha]{k \log{n}})$ our results from
	\Cref{lem:shells} directly apply, and we expect to see $\Theta(b^{\alpha -
	1} \log{n})$ fresh nodes within radius $b w$ of $c_b$.
	The maximum distance between $u$ and any such node is $1 + 2 bw =
	\mathcal{O}(bw)$.
	Similarly, there are $\Theta(d(u, t)/w)$ such ball centers $b$ along the
	path.
	Let us consider the probability that one particular long-range connection 
	$v$ of $u$ is closer to $t$.
	\begin{align*}
		\Pr(v \text{ is closer to } t) &\geq \sum_{b = 1}^{\Theta(d(u, t)/w)}{\frac{\text{min \# fresh highway nodes}}{z(u) (\text{max distance to fresh node})^\alpha}} \\
		&\geq z(u)^{-1} \sum_{b = 1}^{\Theta(d(u, t)/w)}{\frac{\Theta(b^{\alpha - 1} \log{n})}{\Theta(b w)^\alpha}} \\
		&= \Theta([k z(u)]^{-1}) \sum_{b = 1}^{\Theta(d(u, t)/w)}{\frac{1}{b}} = \Theta\pars*{\frac{1}{k z(u)} \log{\frac{d(u, t)}{w}}}
	\end{align*}
	Assuming that $d(u, t) \in \Omega(\sqrt[\alpha]{k \log{n}}) = \Omega(w)$, we
	see that the probability that $v$ is closer to $t$ is at least proportional
	to $\log(d(u, t))/(k z(u))$.
	Recalling that $u$ has $\Theta(k)$ long-range contacts, the
	probability that none of them are closer to $t$ is at most
	$d(u, t)^{-\Theta(1/z(u))}$.
	% For $k \in \Omega(\log{n})$, $z(u)$ is bounded by a constant, so the
	% probability is at most $d^{-\Theta(1)} \leq d^{-c_1}$ for some large enough
	% constant $c_1$.
	This probability is identical to that used in~\cite{gila2023highway}, and
	the rest of the proof follows for values of $k \in \Omega(\log{n})$.
	The corollary follows by assuming with constant probability that there exist
	highway nodes within distance $\mathcal{O}(\sqrt[\alpha]{k})$ of $t$, along
	with a more optimistic choice of $w$.
	Note that the corollary is in expectation rather than w.h.p.
	\qed
\end{proof}

\subsubsection{Contacts Into and Out Of a Ball.}
When upper bounding the diameter of a randomized highway graph with fixed
growth, it is important to know how quickly the highway nodes can be navigated.
In the worst case, all the already-visited highway nodes are as far as possible
from other highway nodes.
For example, we consider the case where a highway node $u$ has no available
long-range constants within some radius $\ell$.
\begin{lemma} \label{lem:contacts-out}
	The probability that a long-range contact $v$ of a highway node $u$ lies
	outside of $\mathcal{B}_\ell(u)$, for $\ell \leq \sqrt[\alpha]{n^\theta}$
	and $\theta < 1$ is at least proportional to $\frac{\log{n}}{k z(u)}$.
\end{lemma}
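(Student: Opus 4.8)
The plan is to lower bound the probability that a single long-range contact $v$ of a highway node $u$ lands outside the ball $\mathcal{B}_\ell(u)$, by exhibiting a ``target region'' of highway nodes far from $u$ that is large enough and close enough. First I would recall that the normalization constant is $z(u) = \sum_{h \in \mathcal{H}} d(u,h)^{-\alpha}$, so that for any specific highway node $h$ at distance $d$ we have $\Pr(u \to h) = d^{-\alpha}/z(u)$. The natural target region is an annulus just outside radius $\ell$, say the shell $\mathcal{B}_{2\ell}(u) - \mathcal{B}_{\ell}(u)$: every node in it is at distance at most $2\ell$ from $u$, so each such highway node is hit with probability at least $(2\ell)^{-\alpha}/z(u)$. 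The key quantitative input is \Cref{lem:shells} (or equivalently \Cref{lem:balls}), which tells us how many highway nodes such a shell contains: choosing $\ell$ to be a multiple of the canonical shell width $w \in \Theta(\sqrt[\alpha]{k\log n})$, a shell at constant shell-distance contains $\Theta(\ell^\alpha/k)$ highway nodes (equivalently $\Theta(\log n)$ highway nodes per $\Theta(w)$-width shell, stacked to reach radius $\ell$), with high probability in $n$.

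Next I would combine these two facts. The probability that the contact $v$ lands in the target shell is at least
\begin{equation*}
\Pr\bigl(v \in \mathcal{B}_{2\ell}(u) - \mathcal{B}_\ell(u)\bigr) \ \geq\ \frac{\Theta(\ell^\alpha/k)}{(2\ell)^\alpha\, z(u)} \ =\ \Theta\!\left(\frac{1}{k\,z(u)}\right).
\end{equation*}
This already shows a single contact lies outside $\mathcal{B}_\ell(u)$ with probability $\Omega(1/(k z(u)))$, but the lemma claims the stronger bound $\Omega(\log n/(k z(u)))$. To pick up the extra $\log n$ factor, instead of using a single constant-distance shell I would sum over a geometric sequence of shells out to radius $\sqrt[\alpha]{n^\theta}$: shells at radii roughly $\ell, 2\ell, 4\ell, \dots$ Each dyadic band at radius $\approx 2^i \ell$ contains $\Theta((2^i\ell)^\alpha/k)$ highway nodes all within distance $O(2^i\ell)$ of $u$, contributing $\Theta(1/(k z(u)))$ to the probability, just as above. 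There are $\Theta(\log(n^\theta/\ell^\alpha)) = \Theta(\log n)$ such bands (using $\ell \le \sqrt[\alpha]{n^\theta}$ and $\ell \in \Theta(\mathrm{poly}(\log n))$ when $k \in \Theta(\log n)$), and since these bands are disjoint their probabilities add, giving $\Pr(v \notin \mathcal{B}_\ell(u)) = \Omega(\log n/(k z(u)))$ as desired. This is essentially the same harmonic-sum structure that appears in the proofs of \Cref{lem:z} and \Cref{lem:closer-contacts}.

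The main obstacle I anticipate is handling the high-probability qualifier cleanly: \Cref{lem:balls:always} gives $\Theta(\ell^\alpha/k)$ highway nodes in a ball only for $\ell = \Omega(\sqrt[\alpha]{k\log n})$ and only with high probability in $n$, so I must make sure the innermost band I use is wide enough and that the union bound over all the relevant balls (for all $n$ nodes $u$, and all the dyadic radii) still leaves a high-probability event — this is routine since there are only $O(n\log n)$ balls involved and each failure probability is polynomially small in $n$. A secondary subtlety is that $\ell$ need not be an exact multiple of $w$; I would absorb this into constant factors by rounding $\ell$ down to the nearest shell boundary, noting this only shrinks the target region by a constant fraction and changes distances by at most a constant factor, which is harmless inside the $\Theta$. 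Finally I should check the edge case where $\ell$ is already close to $\sqrt[\alpha]{n^\theta}$ so that only $O(1)$ dyadic bands fit; but then $\ell^\alpha \in \Theta(n^\theta)$ is already polynomially large, $\mathcal{B}_\ell(u)$ misses a constant fraction of the graph's $\Theta(n)$ nodes hence $\Theta(n/k)$ highway nodes at distance $O(\sqrt[\alpha]{n})$, and a direct count gives $\Pr(v \notin \mathcal{B}_\ell(u)) = \Omega(1/(k z(u))) \cdot \Theta(n/n) $, which I would reconcile with the stated bound by noting $\log n = O(\mathrm{anything})$ is the weaker claim in that regime — so the stated $\frac{\log n}{k z(u)}$ bound holds throughout the claimed range of $\ell$.
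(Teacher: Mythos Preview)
Your overall strategy—summing the contribution of concentric shells outside $\mathcal{B}_\ell(u)$, each contributing $\Theta\bigl(1/(k\,z(u))\bigr)$, and counting how many shells there are—is exactly the paper's approach (the paper uses the linear shells of width $w$ from \Cref{lem:shells} rather than dyadic bands, but the harmonic structure is the same). However, there is a genuine error in how you set the outer limit of the summation, and your edge-case patch does not repair it.

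You sum your dyadic bands only out to radius $\sqrt[\alpha]{n^\theta}$, and then assert there are $\Theta(\log n)$ bands by appealing to ``$\ell \in \Theta(\mathrm{poly}(\log n))$''. But the lemma is stated for \emph{all} $\ell \le \sqrt[\alpha]{n^\theta}$, and in the diameter argument where this lemma is invoked $\ell$ can be as large as $\Theta(\sqrt[\alpha]{n^\theta})$—it is not polylogarithmic in general. In that regime your summation has only $O(1)$ bands, giving a lower bound of $\Omega\bigl(1/(k\,z(u))\bigr)$, which is \emph{weaker}, not stronger, than the claimed $\Omega\bigl(\log n/(k\,z(u))\bigr)$. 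Your sentence ``noting $\log n = O(\mathrm{anything})$ is the weaker claim in that regime'' has the inequality backwards: you need $\Pr \ge c\cdot\frac{\log n}{k\,z(u)}$, and $\Omega(1/(kz(u)))$ does not imply that.

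The fix is simple and is precisely how the paper uses the hypothesis $\ell \le \sqrt[\alpha]{n^\theta}$: sum the shells all the way out to the graph radius $\Theta(\sqrt[\alpha]{n})$, not just to $\sqrt[\alpha]{n^\theta}$. Then the number of dyadic bands is $\Theta\bigl(\log(\sqrt[\alpha]{n}/\ell)\bigr) \ge \Theta\bigl(\tfrac{1}{\alpha}\log n - \tfrac{\theta}{\alpha}\log n\bigr) = \Theta\bigl((1-\theta)\log n\bigr) = \Theta(\log n)$ for every $\ell$ in the allowed range, and the desired bound follows uniformly with no separate edge case needed. The role of $\theta < 1$ is exactly to guarantee this logarithmic gap between $\ell$ and the full diameter; it is not meant as a stopping point for the summation.
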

\begin{corollary} \label{cor:contacts-in}
	The probability that highway node $u$ is the $i$-th long-range contact of
	any highway node outside of $\mathcal{B}_\ell(u)$ is at least proportional
	to $\frac{\log{n}}{k z}$, where $z$ is the weakest normalization constant bound
	of any highway node from \Crefpart{lem:z}{lem:z:always}.
\end{corollary}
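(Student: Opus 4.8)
The plan is to write the desired probability as a normalized sum over the highway nodes lying outside $\mathcal{B}_\ell(u)$, and then to lower bound the unnormalized sum by a shell decomposition centered at $u$. Writing $\mathcal{H}$ for the set of highway nodes, the definition of the randomized highway model gives
\begin{equation*}
	\Pr\bigl(v \notin \mathcal{B}_\ell(u)\bigr) \;=\; \frac{1}{z(u)} \sum_{\substack{h \in \mathcal{H} \\ d(u,h) > \ell}} d(u,h)^{-\alpha},
\end{equation*}
so it suffices to show the numerator is $\Omega\!\left(\frac{\log n}{k}\right)$; the factor $z(u)^{-1}$ is then carried along unchanged, yielding the stated ``proportional to $\frac{\log n}{k\,z(u)}$'' bound (this is implicitly, as throughout the paper, conditioned on the high-probability good events of \Cref{lem:balls} and \Cref{lem:shells}).

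To lower bound the numerator, I would partition the nodes at distance more than $\ell$ from $u$ into the shells $\mathcal{S}_b^{(w)}(u)$ of \Cref{lem:shells} with width $w \in \Theta(\sqrt[\alpha]{k \log n})$, letting $b$ range from $b_1 := \max\{1,\lceil \ell/w \rceil\}$ out to the outermost shell $b_{\max} \in \Theta(\sqrt[\alpha]{n/(k\log n)})$. (If $\ell \le w$ we simply start at $b_1 = 1$, which only enlarges the numerator.) On the good event of \Cref{lem:shells}, each shell $\mathcal{S}_b^{(w)}(u)$ with $1 \le b \le b_{\max}$ contains $\Theta(b^{\alpha-1}\log n)$ highway nodes, all at distance at most $(b+1)w = \Theta(bw)$ from $u$, so the contribution of shell $b$ to the sum is at least a constant times
\begin{equation*}
	\frac{b^{\alpha-1}\log n}{(bw)^\alpha} \;=\; \frac{\log n}{b\,w^\alpha} \;=\; \Theta\!\left(\frac{1}{bk}\right),
\end{equation*}
using $w^\alpha = \Theta(k\log n)$.

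Summing this harmonic-type series over $b_1 \le b \le b_{\max}$ gives $\Omega\!\left(\frac{1}{k}\ln\frac{b_{\max}}{b_1}\right)$. Since $\ell \le \sqrt[\alpha]{n^\theta}$ we have $b_1 = O(\sqrt[\alpha]{n^\theta/(k\log n)})$ while $b_{\max} = \Theta(\sqrt[\alpha]{n/(k\log n)})$, hence $b_{\max}/b_1 = \Omega(n^{(1-\theta)/\alpha})$; because $\theta < 1$ and $\alpha$ are constants, $\ln(b_{\max}/b_1) = \Omega(\log n)$. Therefore the numerator is $\Omega(\log n/k)$ and $\Pr(v \notin \mathcal{B}_\ell(u)) = \Omega\!\left(\frac{\log n}{k\,z(u)}\right)$, which is \Cref{lem:contacts-out}. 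For \Cref{cor:contacts-in}, the argument is symmetric: the distance metric is symmetric, so $w' \notin \mathcal{B}_\ell(u)$ iff $u \notin \mathcal{B}_\ell(w')$, and distinct highway nodes choose their long-range contacts independently; running the same shell sum over candidate source nodes $w'$ outside $\mathcal{B}_\ell(u)$ rather than over candidate targets, and bounding each $z(w')^{-1}$ from below by $z^{-1}$, the uniform worst-case normalization bound of \Crefpart{lem:z}{lem:z:always}, yields the claimed $\Omega\!\left(\frac{\log n}{k\,z}\right)$.

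The main obstacle I anticipate is bookkeeping rather than any deep idea: one must invoke the shell decomposition of \Cref{lem:shells} only on its valid range $1 \le b \le \Theta(\sqrt[\alpha]{n/(k\log n)})$ (which is exactly the range reaching the graph boundary), treat the degenerate case $\ell \le w$ separately, and be careful which estimates are deterministic versus only valid after conditioning on the good events of \Cref{lem:balls} and \Cref{lem:shells}. The key point that makes the bound nontrivial is that the constraint $\theta < 1$ forces the harmonic sum to span $\Omega(\log n)$ shell-distances, which is precisely what produces the extra $\log n$ factor over the naive $\Omega(1/(k\,z(u)))$ obtained from a single shell near radius $\ell$.
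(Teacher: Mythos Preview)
Your proposal is correct and follows essentially the same route as the paper. The paper's proof of \Cref{lem:contacts-out} performs the identical shell summation from $b=\ell/w$ to $\Theta(\sqrt[\alpha]{n}/w)$, obtaining $\frac{1}{\Theta(kz(u))}(\log\sqrt[\alpha]{n}-\log\ell)\ge\Theta\bigl(\frac{\log n}{kz(u)}\bigr)$ via $\ell\le\sqrt[\alpha]{n^\theta}$, and then dispatches the corollary in one line: ``a similar summation in the reverse direction, only considering one long-range contact per highway node, and assuming the worst case scenario for $z$''---precisely your symmetry argument with the uniform bound $z(w')^{-1}\ge z^{-1}$.
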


\begin{proof}
	Without loss of generality we can assume that $\ell \in
	\omega(\sqrt[\alpha]{k \log{n}})$ since any smaller value of $\ell$ would
	only make the probability larger.
	Doing this, we can lower bound the probability of $v$ being outside of
	$\mathcal{B}_\ell(u)$ as follows.
	\begin{align*}
		\Pr(v \notin \mathcal{B}_\ell(u)) &\geq \sum_{b = \ell/w}^{\Theta(\sqrt[\alpha]{n}/w)}{\frac{\text{min \# highway nodes in } \mathcal{S}_b^{(w)}}{(\text{max distance to node in } \mathcal{S}_b^{(w)})^\alpha z(u)}} \\
		&= z(u)^{-1} \sum_{b = \ell/w}^{\Theta(\sqrt[\alpha]{n}/w)}{\frac{\Theta(b^{\alpha - 1} \log{n})}{(b w)^\alpha}} \\
		&= \frac{1}{\Theta(k z(u))} (\log{\frac{\sqrt[\alpha]{n}}{w}} - \log{\frac{\ell}{w}}) = \frac{1}{\Theta(k z(u))} (\log{\sqrt[\alpha]{n}} - \log{\ell}) \\
		&\geq \frac{1}{\Theta(k z(u))} (\log{\sqrt[\alpha]{n}} - \log{\sqrt[\alpha]{n^\theta}}) = \frac{\frac{1}{\alpha} \log{n}}{\Theta(k z(u))} (1 - \theta) \\
		&= \Theta\pars*{\frac{\log{n}}{k z(u)}}
	\end{align*}

	The corollary follows by a similar summation in the reverse direction,
	only considering one long-range contact per highway node, and assuming the
	worst case scenario for $z$.
	\qed
\end{proof}

\clearpage

\section{More Experiments}\label{sec:more-exp}

In this section we support our assumption that the estimated dimensionality
$\alpha$ is a better indicator of the optimal clustering exponent than
network size.
Our experimental framework is very similar to that of the recent papers
analyzing U.S. road networks as small-world
graphs~\cite{goodrich2022modeling,gila2023highway}.
We use the same data set and methodology to represent the
graphs~\cite{routingkit,geisberger2012exact}.
The full code for our experiments can be found at
\url{https://github.com/ofekih/Fast-Geographic-Routing-in-Fixed-Growth-Graphs}.

\subsubsection{Estimating Dimensionality and Clustering Coefficient.}
In a fixed-growth graph with constant dimensionality $\alpha$,
$|\mathcal{B}_\ell(u)| \in \Theta(\ell^\alpha)$ for any $u$ and reasonable
values of $\ell > 0$.
When $\ell = 0$, however, there is only one node in the ball, so we can correct
our estimate by adding 1 to the ball size, i.e., $|\mathcal{B}_\ell(u)| = 1 +
\Theta(\ell^\alpha)$, or equivalently, $|\mathcal{B}_\ell(u)| - 1 =
\Theta(\ell^\alpha)$.
Removing the asymptotic notation, we have $c_1 \ell^\alpha \leq
|\mathcal{B}_\ell(u)| - 1 \leq c_2 \ell^\alpha$ for some constants $c_1 < c_2$.
For any finite graph, we can use the distance between $c_1$ and $c_2$ as some
measure of how well the graph fits a fixed-growth model with dimensionality
$\alpha$.

For simplicity, let us momentarily imagine a lattice $\mathcal{L}$.
A more accurate estimate for the ball size of a lattice is actually has an
additional factor of $2\alpha$, i.e., $2\alpha \ell^\alpha$, since nodes
that are not on the edge have $2 \alpha$ local contacts.
Furthermore, if the lattice does not have wrap-around edges, a node in a corner of a
2D lattice will only have one fourth the growth rate of a node in the center, or
$2^{-\alpha}$ times the growth rate for general $\alpha$.
Both of these factors that depend on $\alpha$ would be hidden from the
big-$\mathcal{O}$ notation but would affect the constants $c_1$ and $c_2$.
To minimize their influence, we use the \textit{ratio} $c_2 / c_1$ as our measure of how
well the graph fits a fixed-growth model with dimensionality $\alpha$.
It is then just a matter of finding the $\alpha$ that minimizes this ratio over all nodes with balls of all radii.
Since we dealt with rather large graphs, minimizing the ratio was
computationally expensive, so we instead sampled balls of all sizes for
thousands of random nodes, determined the best $\alpha$ for each node, and
used the median of these $\alpha$ values as our estimate for the overall graph
dimensionality.

In order to estimate the optimal clustering coefficient $s$, we tried a range of
different values for each state, hundreds of thousands of randomized greedy
routing experiments for each.
We used a highway constant $k = \log{n}$, but anecdotally reached very similar
results using $k = 1$ and even by using a power law distribution as was done
in~\cite{goodrich2022modeling}.
Running these experiments was very time consuming, and there was often a range
of clustering exponents that achieved very similar performance
(see~\cite{goodrich2022modeling}), so our results may contain errors of up to
$\pm 0.05$, which may explain some of the variance but should not significantly
affect our conclusions.

\subsubsection{Dimensionality vs. Network Size.}
In a previous analysis of U.S. road networks~\cite{goodrich2022modeling}, it was
conjectured that the optimal clustering exponent should increase as the size of
the state increases, tending towards 2.
From our method of estimating dimensionality from the previous section, it is
indeed true for lattices that as the side length increases, the estimated
dimensionality increases until it approaches the true dimensionality.
The question remains, however, whether it is the network size or the estimated
dimensionality which more directly predicts the optimal clustering exponent.
While it is hard to determine a clear causal relationship, it is clear from our
results in \Cref{fig:indicator_comparison} that our estimated dimensionality
$\alpha$ is a much better indicator than the network size.
These results further support and motivate our fixed-growth model and the
theoretical findings in this paper.

\begin{figure}[h!]
	\centering
	\includegraphics[height=.3\textheight, clip]{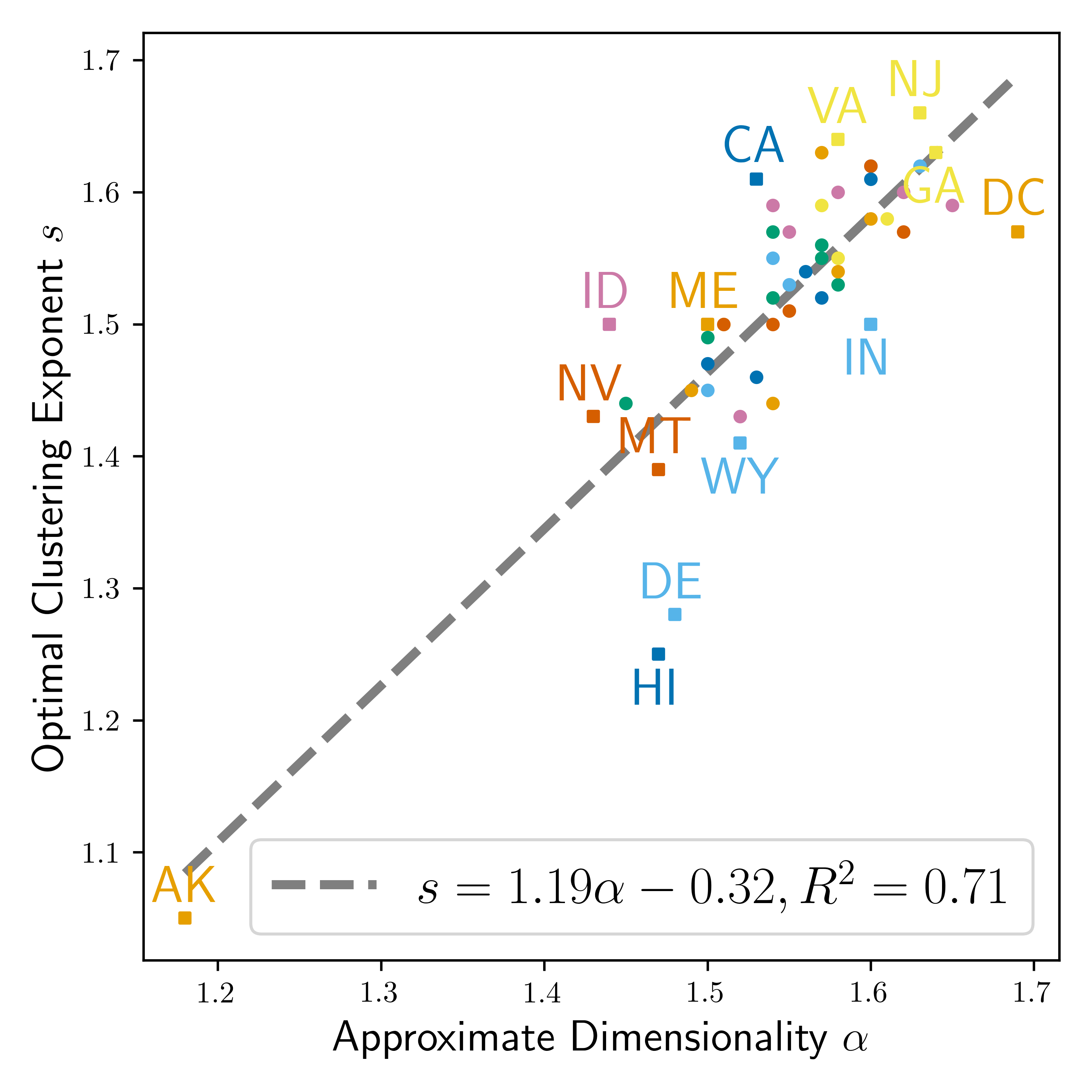}
	\hfill
	\includegraphics[height=.3\textheight,trim={1cm 0 0 0}, clip]{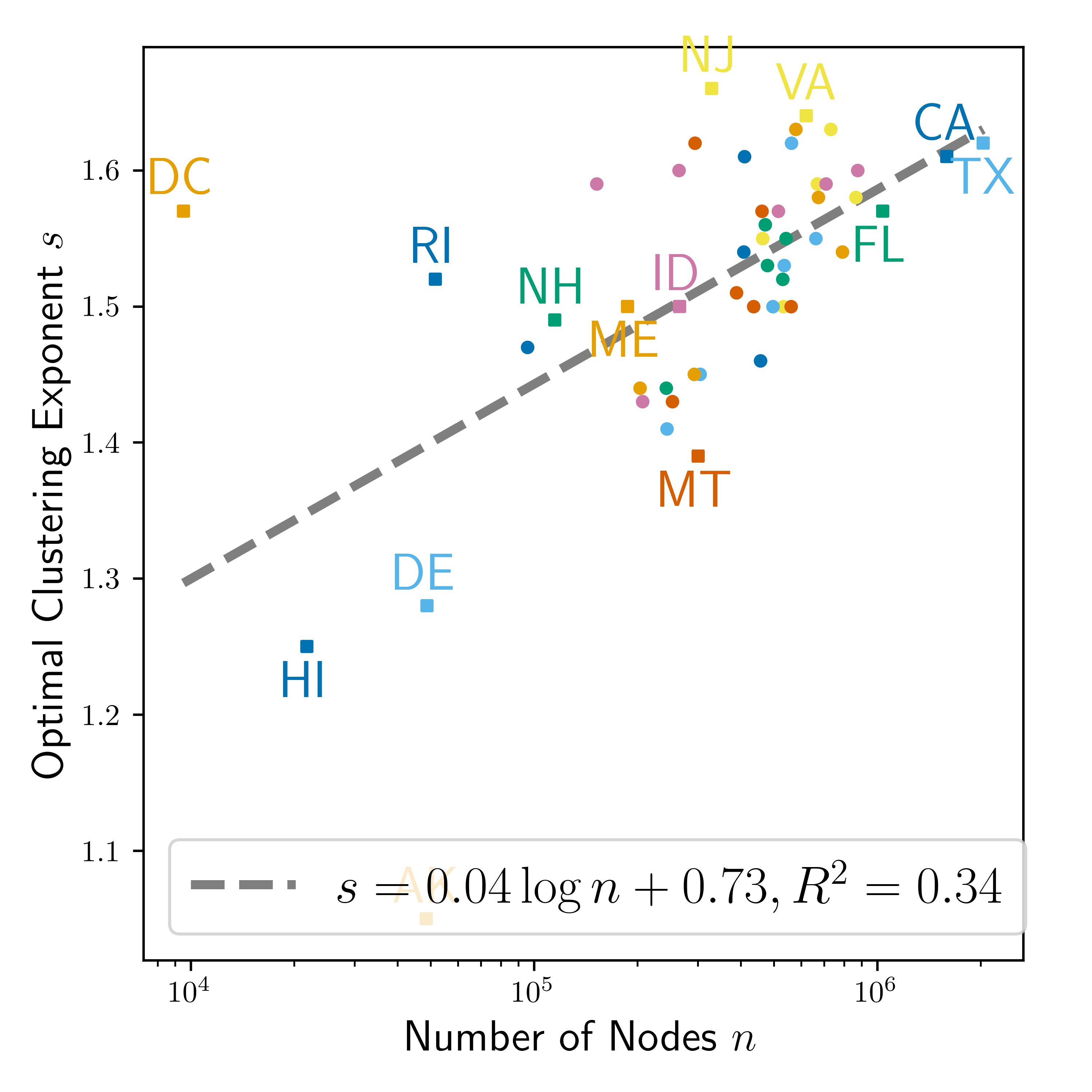}
	\vspace*{-\medskipamount}
	\caption{
		A comparison between the dimensionality $\alpha$ (left) and
		the number of nodes $n$ (right) as indicators for the optimal clustering
		coefficient $s$.
		It is clear that the dimensionality $\alpha$ is a much better indicator
		than the network size.
		\label{fig:indicator_comparison}
	}
\end{figure}

\end{document}